\numberwithin{algorithm}{section}
\newtheorem{theorem}{Theorem}[section]
\newtheorem{proposition}[theorem]{Proposition}
\newtheorem{lemma}[theorem]{Lemma}
\newtheorem{claim}[theorem]{Claim}
\newtheorem{observation}[theorem]{Observation}
\theoremstyle{definition}
\theoremstyle{definition}\newtheorem{definition}[theorem]{Definition}
\newcommand{\pname}[1]{{\sc #1}}
\newcommand{\comment}[1]{}
\newcommand{\QED}{\mbox{}\hfill \rule{3pt}{8pt}\vspace{10pt}\par}
\def\cost{\mathrm{cost}}
\def\deg{\mathrm{deg}}
\newcommand{\ignore}[1]{}
\newcommand{\eat}[1]{}
\newcommand{\integer}{{\mathbb Z}}
\newcommand{\real}{{\mathbb R}}
\newcommand{\squishlist}{\begin{itemize}}
\newcommand{\squishend}{\end{itemize}}
\def\danupon#1{}
\begin{document}

\title{Faster Algorithms for Semi-Matching Problems\thanks{The preliminary version of this paper appeared as \cite{FakcharoenpholLN10} in the Proceeding of the 37th International Colloquium on Automata, Languages and Programming, (ICALP) 2010.}}

\author{
Jittat Fakcharoenphol\\ {\small Kasetsart University}\\
{\tt\small jittat@gmail.com} \and
Bundit Laekhanukit\thanks{Most of the work was done while all authors were at
  Kasetsart University.}\\ \small McGill University\\
{\tt\small blaekh@cs.mcgill.ca} \and
Danupon Nanongkai\footnotemark[\value{footnote}]\\\small University of Vienna\\
{\tt\small danupon@gmail.com}
}

\date{}

\maketitle

\begin{abstract}
We consider the problem of finding \textit{semi-matching} in
bipartite graphs which is also extensively studied under various
names in the scheduling literature. We give faster algorithms for
both weighted and unweighted cases.

For the weighted case, we give an $O(nm\log n)$-time algorithm,
where $n$ is the number of vertices and $m$ is the number of edges,
by exploiting the geometric structure of the problem. This improves
the classical $O(n^3)$-time algorithms by Horn [Operations Research 1973] 
and Bruno, Coffman and Sethi [Communications of the ACM 1974].

For the unweighted case, the bound can be improved even further.
We give a simple divide-and-conquer algorithm which runs in
$O(\sqrt{n}m\log n)$ time, improving two previous $O(nm)$-time
algorithms by Abraham [MSc thesis, University of Glasgow 2003] and
Harvey, Ladner, Lov\'asz and Tamir [WADS 2003 and Journal of
Algorithms 2006]. We also extend this algorithm to solve the
\textit{Balanced Edge Cover} problem in $O(\sqrt{n}m\log n)$ time,
improving the previous $O(nm)$-time algorithm by Harada, Ono,
Sadakane and Yamashita [ISAAC 2008].
\end{abstract}

\section{Introduction}
\label{sect:intro}

In this paper, we consider a relaxation of the maximum bipartite
matching problem called \textit{semi-matching} problem, in both
weighted and unweighted cases. This problem has been previously studied
in the scheduling literature under different names, mostly known as
(non-preemptive) scheduling independent jobs on unrelated machines to
minimize flow time, or $R||\sum C_j$ in the standard scheduling
notation~\cite{Scheduling_book,Scheduling_survey,AMOBook}.

Informally, the problem can be explained by the following off-line
load balancing scenario. We are given a set of jobs and a set of machines.
Each machine can process one job at a time and it takes different
amounts of time to process different jobs. Each job also requires
different processing times if it is processed by different machines. One
natural goal is to have all jobs processed with the minimum {\em
total completion time}, or {\em total flow time}, which is the
summation of the duration each job has to wait until it is finished.
Observe that if the assignment is known, the order each machine
processes its assigned jobs is clear: It processes jobs in an
increasing order of the processing time.

To be precise, the semi-matching problem is as follows. Let
$G=(U\cup V, E)$ be a weighted bipartite graph, where $U$ is a set
of jobs and $V$ is a set of machines. For any edge $uv$, let
$w_{uv}$ be its weight. Each weight of an edge $uv$ indicates time
it takes $v$ to process $u$.
Through out this paper, let $n$ denote the number of vertices and
$m$ denote the number of edges in $G$.
A set $M\subseteq E$ is a {\em semi-matching} if each job $u \in U$
is incident with exactly one edge in $M$. For any semi-matching $M$,
we define the {\em cost} of $M$, denoted by $\cost(M)$, as follows.
First, for any machine $v\in V$, its cost with respect to a
semi-matching $M$ is
%
\[\cost_M(v)= (w_1)+(w_1+w_2)+\ldots +(w_1+\ldots
+w_{\deg_M(v)}) =\sum_{i=1}^{\deg_M(v)}(\deg_M(v)-i+1)\cdot
w_i
\]
where $\deg_M(v)$ is the degree of $v$ in $M$ and $w_1 \leq w_2 \leq
\ldots \leq w_{\deg_M(v)}$ are weights of the edges in $M$ incident
with $v$ sorted increasingly. Intuitively, this is the total
completion time of jobs assigned to $v$. Note that for the
unweighted case (i.e., when $w_e=1$ for every edge $e$), the cost of
a machine $v$ is simply $\deg_M(v)\cdot (\deg_M(v)+1)/2$.
Now, the cost of the semi-matching $M$ is simply the summation of
the cost over all machines: $$\cost(M) = \sum_{v\in V} \cost_M(v).$$
The goal is to find an {\em optimal semi-matching}, a semi-matching
with minimum cost.

\paragraph{Related works} Although the name ``semi-matching'' was
recently proposed by Harvey, Ladner, Lov\'{a}sz, and
Tamir~\cite{HLLT06}, the problem was studied as early as 1970s when
an $O(n^3)$ algorithm was independently developed by
Horn in~\cite{horn1973} and by Bruno, Coffman and Sethi
in~\cite{BrunoCS74}. Since then no progress has been made on this
problem except on its special cases and variations.
For the special case of {\em inclusive set restriction}
where, for each pair of jobs $u_1$ and $u_2$, either all neighbors
of $u_1$ are neighbors of $u_2$ or vice versa, a faster algorithm
with $O(n^2)$ running time was given by Spyropoulos and
Evans~\cite{SpyropoulosE85}. Many variations of this problem were
proved to be NP-hard, including the preemptive
version~\cite{Sitters01}, the case when there are
deadlines~\cite{su2009}, and the case of optimizing total weighted
tardiness~\cite{logendran2004}. The variation where the objective is
to minimize $\max_{v\in V}\cost_M(v)$ was also
considered~\cite{Low06,LeeLP11}.

The unweighted case of the semi-matching problem also received
considerable attention in the past few years. Since it was shown by
\cite{HLLT06} that an optimal solution of the semi-matching problem
is also optimal for the makespan version of the scheduling problem
(where one wants to minimize the time the last machine finishes), we
mention the results of both problems.
The problem was first studied in a special case, called {\em nested}
case where, for any two jobs, if their sets of neighbors are not
disjoint, then one of these sets contains the other set. This case was
shown to be solvable in $O(m+n\log n)$ time~\cite[p.103]{Pinedo01}.
For the general unweighted semi-matching problem,
Abraham~\cite[Section 4.3]{Abraham03} and Harvey, Ladner, Lov\'asz and
Tamir~\cite{HLLT06} independently developed two algorithms with
$O(nm)$ running time. Lin and Li~\cite{LinLi04} also gave an
$O(n^3\log{n})$-time algorithm which is later generalized to a more
general cost function~\cite{Li06}.
%
%
Recently, Lee, Leung and Pinedo~\cite{LeeLP11} showed that the problem
can be solved in polynomial time even when there are release times.

Recently after the preliminary version of this paper appeared, the unweighted semi-matching problem has been generalized to the
quasi-matching problem by Bokal, Bresar and Jerebic~\cite{BokalBJ12}.
In this problem, a function $g$ is provided and each vertex $u\in U$
is required to connect to at least $g(u)$ vertices in $v$. Therefore,
the semi-matching problem is when $g(u)=1$ for every $u\in U$.  They
also developed an algorithm for this problem which is a generalization
of the Hungarian method and used it to deal with a routing problem
in CDMA-based wireless sensor networks.

Galc\'{\i}k, Katrenic and Semanisin \cite{GalcikKS11} very recently showed a nice reduction from the unweighted semi-matching problem to a variant of the {\em maximum bounded-degree semi-matching} problem. Their approach resulted in two algorithms. The first algorithm has the same running time as ours while the second algorithm is randomized and has a running time of $O(n^\omega)$ where $\omega$ is the exponent of the best known matrix multiplication algorithm.

Motivated by the problem of assigning wireless stations (users) to
access points, the unweighted semi-matching problem is also
generalized to the problem of finding optimal semi-matching with
minimum weight where an $O(n^2m)$ time algorithm was
given~\cite{HaradaOSY07}.

Approximation algorithms and online algorithms for this problem
(both weighted and unweighted cases) and the makespan version have
also gained a lot of attention over the past few decades and have
applications ranging from scheduling in hospital to wireless
communication network. (See \cite{Scheduling_survey,vaik05} for the
recent surveys.)

\paragraph{Applications}
As motivated by Harvey et al.~\cite{HLLT06}, even in an online setting
where jobs arrive and depart over time, they may be reassigned
from one machine to another cheaply if the algorithm's running time is
significantly faster than the arrival/departure rate. (One example of
such case is the Microsoft Active Directory
system~\cite{GLLK79-load-balancing,HLLT06}.)
The problem also arose from the Video on Demand (VoD) systems where
the load of video disks needs to be balanced while data blocks from
the disks are retrieved or while serving
clients~\cite{Low02,TamirV10}.
The problem, if solved in the distributed setting, can be used to
construct a load balanced data gathering tree in sensor
networks~\cite{SSK06,MachadoT08}. The same problem also arose in
peer-to-peer systems~\cite{SuriTZ04,KothariSTZ04,SuriTZ07}.

In this paper, we also consider an ``edge cover'' version of the
problem.
In some applications such as sensor networks, there are no
jobs and machines but the sensor nodes have to be clustered and each
cluster has to pick its own head node to gather information from
other nodes in the cluster.
Motivated by this, Harada, Ono, Sadakane and
Yamashita~\cite{HaradaOSY08} introduced the
{\em balanced edge cover} problem\footnote{This problem is also
known as a {\em constant jump system} (see, e.g.,
\cite{tamir1995,Lovasz97}).} where the goal is to find an edge cover
(set of edges incident to every vertex) that minimizes the total
cost over all vertices. (The cost on each vertex is as previously
defined.) They gave an $O(nm)$ algorithm for this problem and
claimed that it could be used to solve the semi-matching problem as
well. We show that this problem can be efficiently reduced to the
semi-matching problem. Thus, our algorithm (for unweighted case)
also gives a better bound on the balanced edge cover problem.

\subsection*{Our results and techniques}
We consider the semi-matching problem and give a faster algorithm
for each of the weighted and unweighted cases. We also extend the
algorithm for the unweighted case to solve the balanced edge cover
problem. 

\squishlist
\item \textbf{Weighted Semi-Matching:} (Section~\ref{sec:weighted})
We present an $O(nm\log{n})$ algorithm, improving the previous
$O(n^3)$ algorithm by Horn~\cite{horn1973} and Bruno et
al.~\cite{BrunoCS74}.
%
%
As in the previous results \cite{horn1973,BCS74,Harvey_slide}, we
use the reduction of the weighted semi-matching problem to the
weighted bipartite matching problem as a starting point. We, however,
only use the structural properties arising from the reduction and do
not actually perform the reduction.

\item \textbf{Unweighted Semi-Matching:} (Section~\ref{sec:unweighted})
We give an $O(\sqrt{n} m\log n)$ algorithm, improving the previous
$O(nm)$ algorithms by Abraham~\cite{Abraham03} and Harvey et
al.~\cite{HLLT06}.\footnote{We also observe an $O(n^{5/2}\log n)$
algorithm that arises directly from the reduction by applying
\cite{KLST01}.}
Our algorithm uses the same reduction to the min-cost flow problem
as in~\cite{HLLT06}. However, instead of canceling one negative
cycle in each iteration, our algorithm exploits the structure of the
graphs and the cost functions to cancel many negative cycles in a
single iteration. This technique can also be generalized to any convex
cost function.

\item \textbf{Balanced Edge Cover:} (Section~\ref{sec:edge-cover})
We also present a reduction from the balanced edge cover problem to
the unweighted semi-matching problem. This leads to an
$O(\sqrt{n}m\log n)$ algorithm for the problem, improving the
previous $O(nm)$ algorithm by Harada et al.~\cite{HaradaOSY08}.
The main idea is to identify the ``center'' vertices of all the
clusters in the optimal solution. (Note that any balanced edge cover
(in fact, any minimal edge cover) clusters the vertices into stars.)
Then, we partition the vertices into two sides, center and
non-center ones, and apply the semi-matching algorithm on this
graph.

\squishend

\section{Weighted semi-matching}
\label{sec:weighted}

In this section, we present an algorithm that finds an optimal weighted
semi-matching in $O(nm\log n)$ time.

\subsection*{Overview}

Our improvement follows from studying the reduction from the
weighted semi-matching problem to the weighted bipartite matching
problem considered in the previous
works~\cite{horn1973,BrunoCS74,Harvey_slide} and the
Edmonds-Karp-Tomizawa (EKT) algorithm for finding the weighted
bipartite matching~\cite{EK72,Tomizawa71}.
We first review these briefly.
For more detail, see Appendix~\ref{sec:EK_algo}
and~\ref{sec:bimatching-algo}.

\paragraph{Reduction} As in~\cite{horn1973,BrunoCS74,Harvey_slide}, we
consider the reduction from the semi-matching problem on a bipartite
graph $G=(U\cup V,E)$ to the minimum-weight bipartite matching on a
graph $\hat G$.
The reduction is done by {\em exploding} the vertices in $V$, i.e., for each
vertex $v\in V$, we create $\deg(v)$ vertices, $v^1, v^2, \ldots,
v^{\deg(v)}$. We also make copies of edges incident to $v$ in the
original graph $G$, i.e, for each vertex $u\in U$ such that $uv\in E$,
we create edges $uv^1, uv^2, \ldots, uv^{\deg(v)}$. For each edge
$uv^i$ incident to $v^i$ in $\hat G$, we set its weight to $i$ times
its original weight in $G$, i.e, $w_{uv^i}=i\cdot w_{uv}$. We denote
the set of these vertices by $\hat V_v$.
Thus, we have
\begin{align*}
\hat{G} &= (U\cup\hat{V},\hat{E}) \\
\hat{V} &= \{v^1,v^2,\ldots,v^{\deg_G(v)}:v\in V\} \\
\hat{E} &= \{uv^1,uv^2,\ldots,v^{\deg_G(v)}:uv\in E\} \\
\hat{w}_{uv^i} &= i\cdot w_{uv}\quad \forall uv\in
E, i\in\{1,2,\ldots,\deg_G(v)\}
\end{align*}

The correctness of this reduction can be seen by
replacing the edges incident to $v$ in the semi-matching by the
edges incident to $v^1, v^2, \ldots$ with weights in decreasing
order.  For example, in Figure~\ref{subfig:reduction}, edge $u_1v_1$
and edge $u_2v_1$ in the semi-matching in $G$ correspond to
$u_1v_1^1$ and $u_2v_1^2$ in the matching in $\hat G$.
The reduction is illustrated in Figure~\ref{subfig:reduction}.

This alone does not give an improvement on the semi-matching problem
because the number of edges becomes $O(nm)$.
However, we can apply some tricks to improve the running time.

\begin{figure}
\centering
\subfigure[Reduction] {
\includegraphics[width=0.45\textwidth]{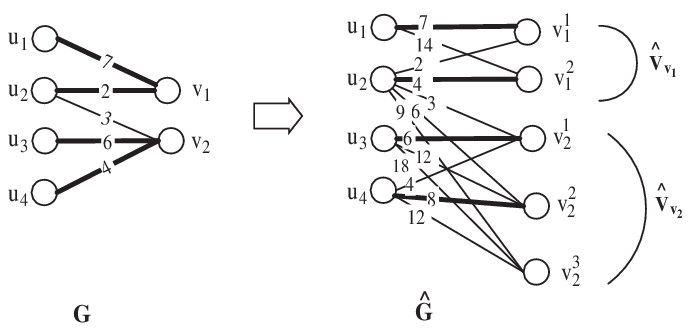}
\label{subfig:reduction} }
\subfigure[Residual graphs] {
\includegraphics[width=0.45\textwidth]{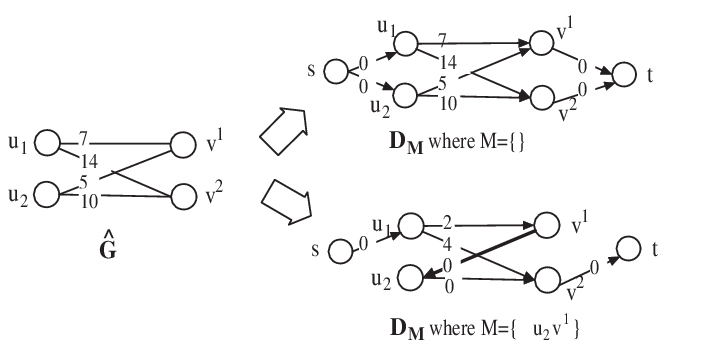}
\label{subfig:residual-graph} }
\caption{}
\label{fig:weight-reduction-soda}
\end{figure}

\paragraph{EKT algorithm} Our improvement comes from studying the behavior
of the EKT algorithm for finding the bipartite matching in $\hat G$.
%
The EKT algorithm iteratively increases the cardinality of the
matching by one by finding a shortest augmenting path.  Such path
can be found by applying Dijkstra's algorithm on the {\em residual
graph} $D_M$ (corresponding to a matching $M$) with a {\em reduced
cost}, denoted by $\tilde w$ as an edge length.\\

Figure~\ref{subfig:residual-graph} shows examples of a residual graph
$D_M$. The direction of an edge depends on whether it is in the
matching or not. The weight of each edge depends on its weight in the
original graph and the costs on its end-vertices. We draw an edge of
length 0 from $s$ to all vertices in $U_M$ and from all vertices in
$\hat V_M$ to $t$, where $U_M$ and $\hat V_M$ are the sets of
unmatched vertices in $U$ and $\hat V$, respectively. We want to find
the shortest path from $s$ to $t$ or, equivalently, from $U_M$ to
$\hat V_M$.

The reduced cost is computed from the {\em potentials} on the
vertices, which can be found as in
Algorithm~\ref{algo:EKT}.\footnote{Note that we set the
potentials in an unusual way: We keep potentials of the unmatched
vertices in $\hat V$ to $0$. The reason is roughly that we can speed
up the process of finding the distances of all vertices but vertices
in $\hat V_M$. Notice that this type of potentials is valid too
(i.e., $\tilde w$ is non-negative) since for any edge $uv$ such that
$v\in \hat V_M$ is unmatched,
$\tilde{w}_{uv}=w_{uv}+p(u)-p(v)=w_{uv}+p(u)\geq 0$.}

\begin{algorithm}
\caption{\pname{EKT Algorithm} $(\hat G, w)$} \label{algo:EKT}
\begin{algorithmic}[1]
  \STATE Let $M=\emptyset$.
  \STATE For every node $v$, let $p(v)=0$. ($p(v)$ is a potential on
  $v$.)
  \REPEAT{
  \STATE\label{line:begin_iteration} Let $\tilde w_{uv}=w_{uv}+p(u)-p(v)$ for every edge
  $uv$. ($\tilde w_{uv}$ is a reduced cost of an edge $uv$.)
  \STATE For every node $v$, compute the distance $d(v)$ which is
  the distance from $U_M$  (the set of unmatched vertices in $U$)
  to $v$ in $D_M$. (Recall that the length of edges in $D_M$ is
  $\tilde w$.)
  \STATE Let $P$ be the shortest $U_M$-$\hat V_M$ path in $D_M$.
  \STATE Update the potential $p(u)$ to $d(u)$ for every
  vertex $u\in U\cup (\hat V\setminus \hat V_M)$.
  \STATE Augment $M$ along $P$, i.e., $M=
  P\triangle M$ (where $\triangle$ denotes the symmetric difference operator).
  }
  \UNTIL {all vertices in $U$ are matched}
  \RETURN $M$
\end{algorithmic}
\end{algorithm}

Applying EKT algorithm directly leads to an $O(n(n'\log n' +
m'))$-time algorithm
where $n=|U|$, $n'=|U\cup\hat{V}|$ and $m'=|\hat{E}|$ are the number of
vertices and edges in $\hat G$.
Since $|\hat V|=\Theta(m)$ and $m'=O(nm)$, the running time is
$O(nm\log n+n^2m)$.
(We note that this could be brought down to $O(n^3)$ by applying the
result of Kao, Lam, Sung and Ting~\cite{KLST01} to reduce the number
of participating edges. See Appendix~\ref{sec:bimatching-algo}.)
The bottleneck here is the Dijkstra's algorithm which needs $O(n'\log
n'+m')$ time. We now review this algorithm and pinpoint the part that
will be sped up.

\paragraph{Dijkstra's algorithm} Recall that the Dijkstra's algorithm
starts from a source vertex and keeps adding to its shortest path
tree a vertex with minimum tentative distance. When a new vertex $v$
is added, the algorithm updates the tentative distance of all
vertices outside the tree by relaxing {\em all} edges incident to
$v$. On an $n'$-vertex $m'$-edge graph, it takes $O(\log n')$ time
(using priority queue) to find a new vertex to add to the tree and
hence $O(n'\log n')$ in total. Further, relaxing all edges takes
$O(m')$ time in total.
Recall that in our case, $m'=O(nm)$ which is too large. {\em
Thus, we wish to reduce the number of edge relaxations to improve
the overall running time.}

\paragraph{Our approach} We reduce the number of edge relaxation as
follows. Suppose that a vertex $u\in U$ is added to the shortest
path tree. For every $v\in V$, a neighbor of $u$ in $G$, we relax
all edges $uv^1$, $uv^2$, $\ldots$, $uv^i$ in $\hat G$ {\em at the
same time}. In other words, instead of relaxing $O(nm)$ edges
in $\hat G$ separately, we group the edges to $m$ groups (according
to the edges in $G$) and relax all edges in each group together. We
develop a relaxation method that takes $O(\log n)$ time per group.
In particular, we design a data structure $H_v$, for each vertex
$v\in V$, that supports the following operations.

\squishlist
\item {\sc Relax}($uv$, $H_v$): This operation works as if it relaxes
edges $uv^1$, $uv^2$, $\ldots$
\item {\sc AccessMin}($H_v$): This operation returns a vertex $v^i$ (exploded from
  $v$) with minimum tentative distance among vertices that are not deleted (by the next
  operation).
\item {\sc DeleteMin}($H_v$): This operation finds $v^i$ from {\sc
AccessMin} and then returns and deletes $v^i.$ %
\squishend

Our main result is that, by exploiting the structure of the problem,
one can design $H_v$ that supports {\sc Relax}, {\sc AccessMin} and
{\sc DeleteMin} in $O(\log n)$, $O(1)$ and $O(\log n)$ respectively.
Before showing such result, we note that speeding up Dijkstra's
algorithm and hence EKT algorithm is quite straightforward once we
have $H_v$: We simply build a binary heap $H$ whose nodes correspond
to vertices in an original graph $G$. For each vertex $u\in U$, $H$
keeps track of its tentative distance. For each vertex $v\in V$, $H$
keeps track of its {\em minimum tentative distance} returned from
$H_v$.

\paragraph{Main idea in designing $H_v$} Before going into details, we sketch the main idea
here. The data structure $H_v$ that allows fast ``group relaxation''
operation can be built because of the following nice structure of
the reduction: For each edge $uv$ of weight $w_{uv}$ in $G$, the
weights $w_{uv^1}, w_{uv^2}, \ldots$ of the corresponding edges in
$\hat G$ increase linearly (i.e., $w_{uv}, 2w_{uv}, 3w_{uv},
\ldots$). This enables us to know the order of vertices, among $v^1,
v^2, \ldots$, that will be added to the shortest path tree.  For
example, in Figure~\ref{subfig:residual-graph}, when $M=\emptyset$,
we know that, among $v^1$ and $v^2$, $v^1$ will be added to the
shortest path tree first as it always has a smaller tentative
distance.

However, since the length of edges in $D_M$ does not solely depend
on the weights of the edges in $\hat G$ (in particular, it also
depends on potentials on both end-vertices), it is possible (after
some iterations of the EKT algorithm) that $v^1$ is added to the
shortest path tree after $v^2$.

Fortunately, due to the way the potential is defined by the EKT
algorithm, a similar nice property still holds: Among $v^1, v^2,
\ldots$ in $D_M$ corresponding to $v$ in $G$, if a vertex $v^k$, for
some $k$, is added to the shortest path tree first, then the
vertices on each side of $v^k$ have a nice order: Among $v^1, v^2,
\ldots, v^{k-1}$, the order of vertices added to the shortest path
tree is $v^{k-1}, v^{k-2}, \ldots, v^2, v^1$.  Further, among
$v^{k+1}, v^{k+2}, \ldots$, the order of vertices added to the
shortest path tree is $v^{k+1}, v^{k+2}, \ldots$.

This main property, along with a few other observations, allow us to
construct the data structure $H_v$.
In the next section, we show the properties we need and use them to
construct $H_v$ in the latter section.

\subsection{Properties of the tentative distance}
\label{sec:properties} Consider any iteration of the EKT algorithm
(with a potential function $p$ and a matching $M$). We study the
following functions $f_{*_v}$ and $g_{*v}$.

\begin{definition}
\label{def:fandg} For any edge $uv$ from $U$ to $V$ and any integer
$1\leq i\leq \deg(v)$, let \[g_{uv}(i) = d(u)+p(u)+i\cdot w_{uv}
\quad \text{and}\quad f_{uv}(i)=g_{uv}(i)-p(v^i) =
d(u)+p(u)-p(v^i)+i\cdot w_{uv}.\] %
For any $v\in V$ and $i\in [\deg(v)]$, define the {\em lower envelope}
of $f_{uv}$ and $g_{uv}$ over all $u\in U$ as
\[
f_{*v}(i)=\min_{u:uv\in E}f_{uv}(i)\quad \quad \text{and}\quad
g_{*v}(i)=\min_{u:uv\in E}g_{uv}(i).\]
\end{definition}

Our goal is to understand the structure of the function $f_{*v}$
whose values $f_{*v}(1), f_{*v}(2), \ldots$ are tentative distances
of $v^1, v^2, \ldots$, respectively. The function $g_{*v}$ is simply
$f_{*v}$ with the potential of $v$ ignored. We define $g_{*v}$ as it
is easier to keep track of since it is a combination of linear
functions $g_{uv}$ and therefore piecewise linear.
Now we state the key properties that enable us to keep track of
$f_{*v}$ efficiently. Recall that $v^1, v^2, \ldots$ are the
exploded vertices of $v$ (from the reduction).

\begin{proposition} \label{prop:main-properties}
Consider a matching $M$ and a potential $p$ at any iteration of the
EKT algorithm.

\squishlist
\item[(1)] For any vertex $v\in V$, there exists $\alpha_v$ such that
  $v^1, \ldots, v^{\alpha_v}$ are all matched and $v^{\alpha_v+1}, \ldots,
  v^{\deg(v)}$ are all unmatched.
\item[(2)] For any vertex $v\in V$, $g_{*v}$ is a piecewise linear
  function.
\item[(3)] For any $i$ and any edge $uv\in E$ where $u\in U$ and $v\in V$,
  $f_{uv}(i)=f_{*v}(i)$ if and only if $g_{uv}(i)=g_{*v}(i)$.

\item[(4)] For any edge $uv\in E$ where $u\in U$ and $v\in V$,
let $\alpha_v$ be as in (1). There exists an integer $1\leq
\gamma_{uv}\leq k$ such that for $i=1,2,\ldots,\gamma_{uv}-1$,
$f_{uv}(i)\geq f_{uv}(i+1)$ and for
$i=\gamma_{uv},\gamma_{uv}+1,\ldots,\alpha_v-1$, $f_{uv}(i)\leq
f_{uv}(i+1)$. In other words,
$f_{uv}(1),f_{uv}(2),\ldots,f_{uv}(\alpha_v)$ is a unimodal
sequence. \squishend
\end{proposition}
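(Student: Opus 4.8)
The plan is to take the four parts roughly in increasing order of work. Parts~(2) and~(3) are immediate re-readings of Definition~\ref{def:fandg}: for fixed $u,v$ the map $i\mapsto g_{uv}(i)=(d(u)+p(u))+i\cdot w_{uv}$ is affine with slope $w_{uv}\ge 0$, so $g_{*v}=\min_{u:uv\in E}g_{uv}$ is a pointwise minimum of finitely many affine functions and hence a (concave) piecewise-linear function, which is~(2); lines with $d(u)=\infty$ are simply dropped, and the claim is vacuous if $v$ is unreachable. For~(3), in $f_{uv}(i)=g_{uv}(i)-p(v^i)$ the subtracted term $p(v^i)$ does not depend on $u$, so it is a common additive shift over all $u$ adjacent to $v$; hence $f_{*v}(i)=\bigl(\min_{u}g_{uv}(i)\bigr)-p(v^i)=g_{*v}(i)-p(v^i)$, and therefore $f_{uv}(i)=f_{*v}(i)\iff g_{uv}(i)-p(v^i)=g_{*v}(i)-p(v^i)\iff g_{uv}(i)=g_{*v}(i)$.

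For~(1) I would induct on the iterations of Algorithm~\ref{algo:EKT}. Initially $M=\emptyset$, every copy of every $v$ is unmatched, and $\alpha_v=0$ works. In one iteration we augment along a shortest $U_M$--$\hat V_M$ path $P$; because $P$ alternates, every $\hat V$-vertex in the interior of $P$ is incident on $P$ to exactly one matched and one non-matched edge and so stays matched after $M\mapsto P\triangle M$. Hence the only $\hat V$-vertex whose matched/unmatched status changes is the $\hat V$-endpoint of $P$, say a copy $v^{j}$ of $v$, and it suffices to show $P$ can be chosen so that $j=\alpha_v+1$. If $j>\alpha_v+1$, let $u$ be the predecessor of $v^{j}$ on $P$; then $uv\in E$ and $\alpha_v+1\le j\le\deg(v)$, so $uv^{\alpha_v+1}\in\hat E$, and $v^{\alpha_v+1}$ is unmatched, so its potential is $0$. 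Thus $\tilde w_{uv^{\alpha_v+1}}=(\alpha_v+1)w_{uv}+p(u)\le j\,w_{uv}+p(u)=\tilde w_{uv^{j}}$ (using $w_{uv}\ge 0$), whence $d(v^{\alpha_v+1})\le d(u)+\tilde w_{uv^{\alpha_v+1}}\le d(v^{j})$; since $v^{\alpha_v+1}$ is unmatched as well, it is also a closest unmatched vertex, so $P$ may instead be chosen to end at $v^{\alpha_v+1}$. After the augmentation $v^1,\dots,v^{\alpha_v+1}$ are matched and the rest unmatched, so the invariant is restored ($\alpha_v$ rises by one for this $v$ and is unchanged elsewhere). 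With strictly positive weights the inequality is strict and the endpoint is forced; otherwise one breaks ties toward the smallest index.

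Part~(4) I would reduce to a single concavity statement. Writing $\Delta_{uv}(i)=f_{uv}(i+1)-f_{uv}(i)$, Definition~\ref{def:fandg} gives $\Delta_{uv}(i)=w_{uv}-\bigl(p(v^{i+1})-p(v^i)\bigr)$ because the term $d(u)+p(u)$ cancels. So $f_{uv}(1),\dots,f_{uv}(\alpha_v)$ is unimodal (nonincreasing up to some index $\gamma_{uv}$, then nondecreasing) exactly when $\Delta_{uv}$ is nondecreasing on $\{1,\dots,\alpha_v-1\}$, which, for any and hence every neighbour $u$ of $v$, is equivalent to the sequence $p(v^1),p(v^2),\dots,p(v^{\alpha_v})$ being concave, i.e.\ $p(v^{i+1})-p(v^i)$ being nonincreasing; one then takes $\gamma_{uv}$ to be the first index at which $\Delta_{uv}\ge0$ (or $\alpha_v$ if there is none), and the bound claimed in the statement follows. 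To obtain the concavity I would invoke the optimality invariants that the EKT algorithm maintains at the start of each iteration: the reduced cost $\tilde w_{uv^i}=i\,w_{uv}+p(u)-p(v^i)$ is $\ge 0$ on every edge of $\hat G$ and $=0$ on the edge matching each matched copy. For a matched copy $v^i$ with partner $u_i$ these give $p(v^i)=i\,w_{u_iv}+p(u_i)$ and $p(v^i)\le i\,w_{uv}+p(u)$ for every $u$ with $uv\in E$, so $p(v^i)=\min_{u:uv\in E}\bigl(i\,w_{uv}+p(u)\bigr)$. Since $i$ runs over the consecutive integers $1,\dots,\alpha_v$ while the minimum is always taken over the same family of lines $\{x\mapsto x\,w_{uv}+p(u):uv\in E\}$, the values $p(v^1),\dots,p(v^{\alpha_v})$ are the restriction to those integers of a lower envelope of lines, which is concave, as required.

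The step I expect to be the real obstacle is this concavity, and inside it the invariant ``reduced cost $=0$ on matched edges'' under the nonstandard potential convention of Algorithm~\ref{algo:EKT}, which keeps every unmatched copy in $\hat V$ at potential $0$: one has to verify that this convention touches only the unmatched copies, that a copy's potential attains its ``correct'' value precisely at the iteration in which it joins the matching, and that nonnegativity of all reduced costs is preserved. Once these are in place the lower-envelope argument closes~(4); and if the bookkeeping proves awkward, the fallback is to prove the concavity of $p(v^1),\dots,p(v^{\alpha_v})$ directly by induction on the iterations, showing that a single shortest-path augmentation preserves it — more hands-on, but it avoids relying on the exact form of the invariants.
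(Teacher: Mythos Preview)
Your proposal is correct, and for parts (2) and (3) it is identical to the paper's argument. The differences are in (1) and (4).

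For (1), the paper gives a one-line argument from the extremality invariant of EKT: if $v^{i+1}$ were matched while $v^i$ is not, then replacing $uv^{i+1}\in M$ by $uv^i$ yields a matching of the same cardinality and strictly smaller weight, contradicting that $M$ is extreme. This is cleaner than your induction on iterations because it is independent of how the algorithm breaks ties among equally short augmenting paths; your version has to stipulate a tie-breaking rule (or assume strictly positive weights) to force the endpoint to be $v^{\alpha_v+1}$.

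For (4), the two arguments are really the same fact in different clothing. You reduce unimodality to concavity of $i\mapsto p(v^i)$ on $\{1,\dots,\alpha_v\}$ and then get concavity in one stroke from $p(v^i)=\min_{u:uv\in E}\bigl(i\,w_{uv}+p(u)\bigr)$, a lower envelope of lines. The paper instead proves a local inequality (its Claim~\ref{lem:price_bound}): if $u$ is tight for $v^i$ and $u'$ is tight for $v^{i+1}$, then $w_{uv}\ge p(v^{i+1})-p(v^i)\ge w_{u'v}$, and chains these through Claim~\ref{lem:distance_implication} to get unimodality. Chaining Claim~\ref{lem:price_bound} at consecutive indices gives exactly $p(v^{i})-p(v^{i-1})\ge p(v^{i+1})-p(v^{i})$, i.e.\ your concavity, so the content is identical; your packaging is more conceptual and slightly shorter. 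Both routes rest on the same EKT invariant you flag as the crux, namely that for every matched copy $v^i$ there is a neighbour $u$ with $p(v^i)=i\,w_{uv}+p(u)$ while $p(v^i)\le i\,w_{u'v}+p(u')$ for all other neighbours $u'$; the paper asserts and uses this without spelling out the verification either.
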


Figure~\ref{subfig:potential-and-g} and \ref{subfig:unimodal} show
the structure of $g_{*v}$ and $f_{*v}$ according to statement (2)
and (4) in the above proposition. By statement (3), the two pictures
can be combined as in Figure~\ref{subfig:f-and-g}: $g_{*v}$
indicates $u$ that makes both $g_{*v}$ and $f_{*v}$ minimum in each
interval and one can find $i$ that minimizes $f_{*v}$ in each
interval by looking at $\alpha_v$ (or near $\alpha_v$ in some case).

\begin{figure}
  \centering
\subfigure[$g_{*v}$ and potential function.]
  {\label{subfig:potential-and-g}
  \includegraphics[height=0.22\textwidth, clip=true, trim= 5.3cm
  9.5cm 8.9cm 1.5cm]{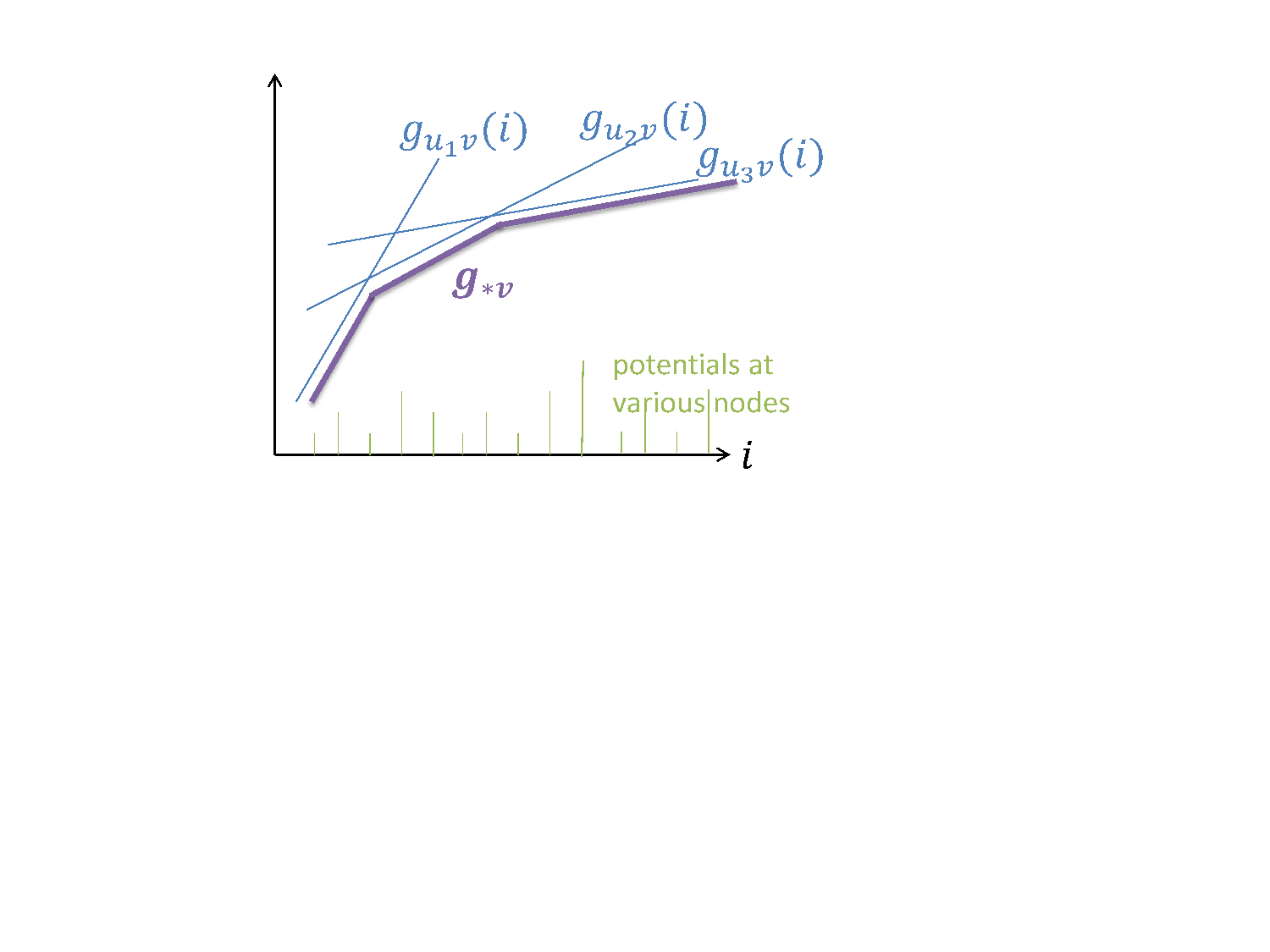}}
\subfigure[$f_{uv}$ is unimodal.]{
  \label{subfig:unimodal}
  \includegraphics[height=0.22\textwidth, clip=true, trim= 5.3cm 9.5cm 10cm 1.5cm]{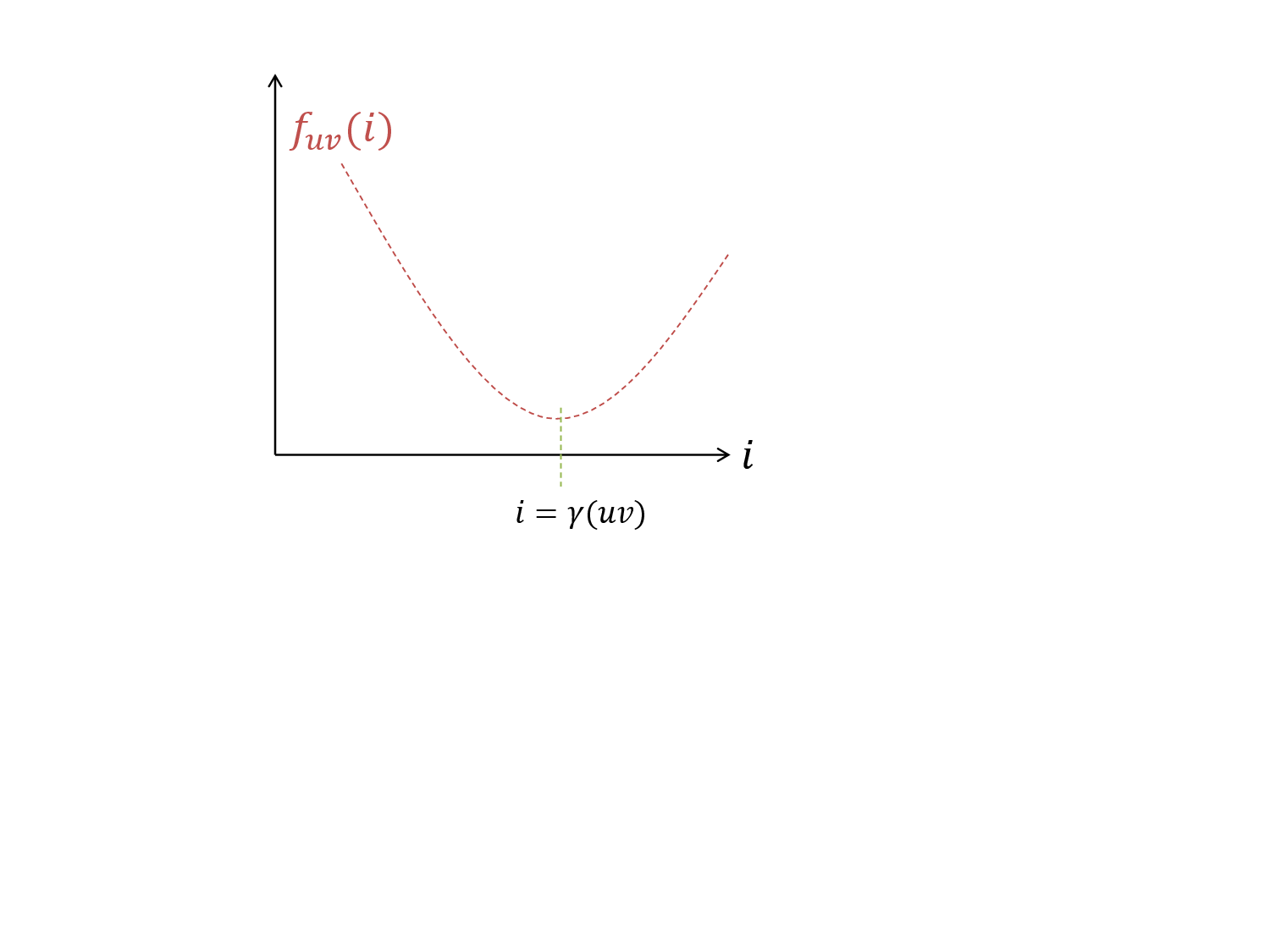}}
\subfigure[$f_{*v}$ together with $g_{*v}$.]{
  \label{subfig:f-and-g}
  \includegraphics[height=0.22\textwidth, clip=true, trim= 3.5cm 9.5cm 8.4cm 1.5cm]{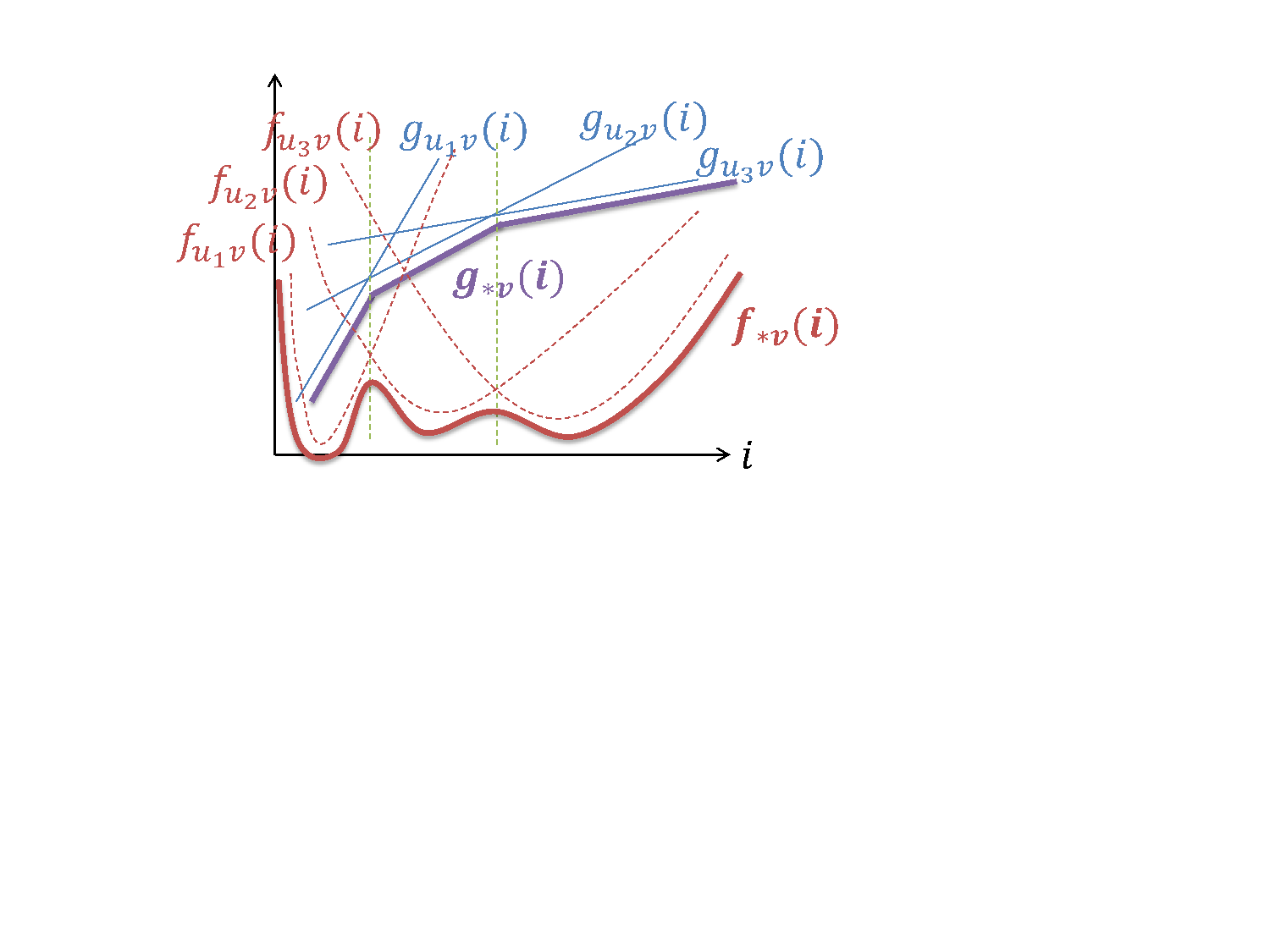}}

  \caption{Figures show graphs of potentials, $g_{*v}$, $f_{uv}$ and $f_{*v}$, where $w_{u_1v}>w_{u_2v}>w_{u_3v}>\ldots$. These functions only have value at integer points. For the sake of presentation, these functions are plotted as lines.}
  \label{fig:bowl}
\end{figure}


\begin{proof}\hfill

\noindent\textbf{(1)} The first statement follows from the following
claim.

\begin{claim}
For any $i$, if the exploded vertex $v^{i+1}$ of $v$ (in $\hat V_v$)
is matched by $M$, then $v^i$ is also matched.
\end{claim}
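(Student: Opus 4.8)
The plan is to derive the Claim directly from the way Algorithm~\ref{algo:EKT} augments, using two features of the reduction: the weight structure $\hat w_{uv^i}=i\cdot w_{uv}$, and the convention that every \emph{unmatched} vertex of $\hat V$ keeps potential $0$, so that $\tilde w_{uv^j}=j\cdot w_{uv}+p(u)$ whenever $v^j$ is unmatched. I will assume, as one may without loss of generality, that every $w_{uv}>0$. First I would record the monotonicity fact that once a vertex of $\hat V$ becomes matched it stays matched: augmenting along the shortest $U_M$--$\hat V_M$ path $P$ only rematches the interior $\hat V$-vertices of $P$ (each of which already has an outgoing matching edge in $D_M$) and newly matches the two endpoints of $P$ lying in $U_M$ and $\hat V_M$; no $\hat V$-vertex is ever unmatched. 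Hence it suffices to prove that $v^i$ is matched at the start of the iteration in which $v^{i+1}$ first becomes matched.

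Fix that iteration; let $M_0$ be the matching at its start (so $v^{i+1}\in\hat V_{M_0}$) and $P$ the shortest augmenting path used. Since every interior $\hat V$-vertex of $P$ is matched under $M_0$ but $v^{i+1}$ is not, $v^{i+1}$ must be the endpoint of $P$ adjacent to $t$; write the last two edges of $P$ as $u\to v^{i+1}\to t$ for some $u\in U$ with $uv\in E$ (so $uv^i\in\hat E$ too, since all copies $uv^1,\dots,uv^{\deg(v)}$ exist). Suppose, for contradiction, that $v^i$ is also unmatched under $M_0$. Then $v^i\in\hat V_{M_0}$, the non-matching edge $uv^i$ is oriented $u\to v^i$ in $D_{M_0}$, and there is a $0$-length edge $v^i\to t$; moreover $v^i\notin P$ (an unmatched $\hat V$-vertex cannot be an interior vertex of $P$, and $v^i\neq v^{i+1}$). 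Replacing the suffix $u\to v^{i+1}\to t$ of $P$ by $u\to v^i\to t$ therefore gives a simple $U_{M_0}$--$\hat V_{M_0}$ path $P'$ with
\[
\mathrm{len}(P')=\mathrm{len}(P)-\tilde w_{uv^{i+1}}+\tilde w_{uv^i}=\mathrm{len}(P)-(i{+}1)w_{uv}+i\,w_{uv}=\mathrm{len}(P)-w_{uv}<\mathrm{len}(P),
\]
where the second equality uses $p(v^{i+1})=p(v^i)=0$ because both are unmatched. This contradicts the choice of $P$ as a shortest $U_{M_0}$--$\hat V_{M_0}$ path, so $v^i$ is matched under $M_0$, hence under $M$.

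I expect the only delicate points to be bookkeeping rather than depth: getting the orientation of matching versus non-matching edges of $D_M$ right, checking $v^i\notin P$ so that $P'$ is genuinely a simple path, and invoking the ``potential $=0$ on unmatched $\hat V$-vertices'' convention at exactly the right place; the strict positivity of the weights is what makes the inequality strict. (Alternatively, one could avoid the path surgery altogether by invoking the standard invariant of the successive-shortest-paths method that $M$ is, at every iteration, a minimum-weight matching among matchings of its cardinality: if $v^{i+1}$ were matched to $u$ while $v^i$ were unmatched, the swap $M\mapsto M\setminus\{uv^{i+1}\}\cup\{uv^i\}$ would keep the cardinality and change the weight by $i\,w_{uv}-(i{+}1)w_{uv}=-w_{uv}<0$, a contradiction. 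I would still present the direct argument above, since it relies only on Algorithm~\ref{algo:EKT} as stated.)
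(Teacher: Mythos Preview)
Your proof is correct, but your primary argument takes a different route from the paper's. The paper simply invokes the standard invariant of the successive-shortest-paths method---that the matching $M$ maintained by EKT is \emph{extreme} (minimum weight among all matchings of its cardinality)---and then observes that if $uv^{i+1}\in M$ while $v^i$ is free, the swap $M\mapsto (M\setminus\{uv^{i+1}\})\cup\{uv^i\}$ preserves cardinality and strictly lowers weight, a contradiction. This is exactly the alternative you sketch in your parenthetical remark at the end.

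Your main argument instead works directly with the augmenting-path mechanics: you identify the iteration where $v^{i+1}$ is first matched, argue it must be the $\hat V_M$-endpoint of the augmenting path $P$, and perform path surgery to exhibit a strictly shorter $U_M$--$\hat V_M$ path ending at $v^i$. This is sound, and the bookkeeping (that matched $\hat V$-vertices stay matched, that $v^i\notin P$, that $p(v^i)=p(v^{i+1})=0$) is handled correctly. The trade-off is that the paper's two-line swap argument is considerably shorter and appeals to a property of EKT that is needed elsewhere anyway, whereas your argument is more self-contained but longer and relies on the specific potential convention for unmatched $\hat V$-vertices. Since you already state the extreme-matching argument as an alternative, you might consider promoting it to the main proof.
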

\begin{proof}
The claim follows from the fact that EKT algorithm maintains $M$ so
that $M$ is a so-called {\em extreme matching}, i.e., $M$ has the minimum weight among matchings of the same size. Suppose that $v^{i+1}$ is matched
by $M$ (i.e., $uv^{i+1}\in M$), but $v^i$ is not matched. Then we
can remove $uv^{i+1}$ from $M$ and add $uv^i$ to $M$. The
resulting matching will have a cost less than $M$ but have the same
cardinality, a contradiction.
\end{proof}

\noindent\textbf{(2)} To see the second statement, notice that
$g_{uv}=d(u)+p(u)+i\cdot w_{uv}$ is linear for a fixed $uv\in E$.
Hence, $g_{*v}$ is a lower envelope of a linear function, implying
that it is piecewise linear.\\

\noindent\textbf{(3)} To prove the third statement, recall that for
any $u$ and any $i$, $f_{uv}(i)=g_{uv}(i)-p(v^i)$. Therefore, for
any $u$, $u'$ and $i$, $f_{uv}(i)>f_{u'v}(i)$ if and only if
$g_{uv}(i)>g_{u'v}(i)$. Thus, the third statement follows.\\

\noindent\textbf{(4)} For the fourth claim, we first explain the
intuition. First, observe that the function $g_{uv}$ is increasing
with rate $w_{uv}$. Moreover, the difference of $f_{uv}(i)$ and
$f_{uv}(j)$ is a function of the potential $p(v^i)$ and $p(v^j)$ and
the multiple of edge weight $(j-i)w_{uv}$. In fact, whether the
difference is negative or positive depends on the value of these
three parameters. We show that these parameters change monotonically
and so we have the desired property.


To prove the fourth statement formally, we first prove two claims.

For the first claim below, recall that the potential of matched
vertices, at any iteration, is defined to be the distance on the
residual graph of the previous iteration. In particular, for any
$v^i\in \hat V$, there is a vertex $u\in U$ such that $p(u)+i\cdot
w_{uv}=p(v)$. (See Algorithm~\ref{algo:EKT}.)

\begin{claim} \label{lem:price_bound}
For any integer $i< \alpha_v$, consider the exploded vertices $v^i$
and $v^{i+1}$. Let $u$ and $u'$ denote two vertices in $U$ such that
$p(u)+i\cdot w_{uv}=p(v^i)$ and $p(u')+(i+1)\cdot
w_{u'v}=p(v^{i+1})$. Then $w_{uv}\geq p(v^{i+1})-p(v^i)\geq
w_{u'v}$.
\end{claim}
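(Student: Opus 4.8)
The plan is to derive both inequalities of the claim from a single structural invariant maintained by the EKT algorithm: that the reduced costs $\tilde w$ are nonnegative on \emph{every} edge of $\hat G$ and vanish on the matched edges (feasibility of the potentials together with tightness on $M$; see Appendix~\ref{sec:EK_algo}). First I would observe that, since $i<\alpha_v$, statement (1) of Proposition~\ref{prop:main-properties} guarantees that both $v^i$ and $v^{i+1}$ are matched by $M$. In particular the vertices $u,u'$ named in the claim exist: complementary slackness forces $\tilde w_e=0$ for the matched edge $e$ at $v^i$ (resp.\ at $v^{i+1}$), which unravels to $p(u)+i\cdot w_{uv}=p(v^i)$ (resp.\ $p(u')+(i+1)\cdot w_{u'v}=p(v^{i+1})$). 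Note also that since $uv^i\in M\subseteq\hat E$ we have $uv\in E$, so the edge $uv^{i+1}$ exists in $\hat G$ (legitimate because $i+1\le\alpha_v\le\deg(v)$), and likewise $u'v^i$ exists.

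For the upper bound $w_{uv}\ge p(v^{i+1})-p(v^i)$, I would apply nonnegativity of the reduced cost to the (not necessarily matched) edge $uv^{i+1}$, namely $\tilde w_{uv^{i+1}}=(i+1)w_{uv}+p(u)-p(v^{i+1})\ge 0$, and subtract from it the equality $p(u)+i\cdot w_{uv}=p(v^i)$; this immediately yields $w_{uv}\ge p(v^{i+1})-p(v^i)$. The lower bound $p(v^{i+1})-p(v^i)\ge w_{u'v}$ is symmetric: apply nonnegativity to the edge $u'v^i$, i.e.\ $\tilde w_{u'v^i}=i\cdot w_{u'v}+p(u')-p(v^i)\ge 0$, and subtract it from the equality $p(u')+(i+1)w_{u'v}=p(v^{i+1})$. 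Chaining the two inequalities gives $w_{uv}\ge p(v^{i+1})-p(v^i)\ge w_{u'v}$, which is the claim.

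The only point that needs care — and the step I expect to be the main (if modest) obstacle — is justifying that the reduced-cost invariant really does hold for every edge of $\hat G$ at the moment it is invoked, including edges such as $uv^{i+1}$ joining a matched $\hat V$-vertex to a $U$-vertex that is not its partner. This is precisely the standard primal–dual invariant of the Edmonds–Karp–Tomizawa algorithm, so I would simply cite it rather than re-prove it. I would also remark that the nonstandard convention of pinning the potentials of \emph{unmatched} $\hat V$-vertices to $0$ (the footnote on Algorithm~\ref{algo:EKT}) does not interfere here: both $v^i$ and $v^{i+1}$ are matched, hence carry genuine potentials equal to the distances computed in the previous iteration, and the two edges we relax, $uv^{i+1}$ and $u'v^i$, run between $U$ and matched $\hat V$-vertices, where $\tilde w\ge 0$ holds unconditionally.
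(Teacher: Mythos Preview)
Your proposal is correct and is essentially the paper's own argument. The paper derives the two inequalities from $p(v^{i+1})\le p(u)+(i+1)w_{uv}$ and $p(v^i)\le p(u')+i\,w_{u'v}$ combined with the two given equalities; these two inequalities are exactly the nonnegativity of $\tilde w_{uv^{i+1}}$ and $\tilde w_{u'v^i}$ that you invoke, so you have simply made explicit the invariant the paper uses tacitly.
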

\begin{proof} The first part, $w_{u'v}\geq p(v^{i+1})-p(v^i)$,
follows from $p(v^i)=p(u)+i\cdot w_{uv}$ and $p(v^{i+1})\leq
p(u)+(i+1)\cdot w_{uv}.$ The second part, $p(v^{i+1})-p(v^i)\geq
w_{u'v}$, follows from $p(v^i)\leq p(u')+i\cdot w_{u'v}$ and
$p(v^{i+1})= p(u')+(i+1)\cdot w_{u'v}$.
\end{proof}

The proof of the next claim follows directly from the definition of
$f_{uv}$ (cf. Definition~\ref{def:fandg}).
\begin{claim} \label{lem:distance_implication}
For any $i<\alpha_v$, $f_{uv}(i)> f_{uv}({i+1})$ if and only if
$p(v^{i+1})-p(v^i)>w_{uv}$ and $f_{uv}(i)< f_{uv}({i+1})$ if and
only if $p(v^{i+1})-p(v^i)<w_{uv}$.
\end{claim}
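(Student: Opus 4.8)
The plan is to derive the claim directly from Definition~\ref{def:fandg}; no structural input is needed beyond a cancellation of common terms. First I would write the two quantities to be compared using $f_{uv}(j)=d(u)+p(u)-p(v^j)+j\cdot w_{uv}$ and subtract them. The summand $d(u)+p(u)$ appears identically in $f_{uv}(i)$ and $f_{uv}(i+1)$ and cancels, while the linear part contributes $i\cdot w_{uv}-(i+1)\cdot w_{uv}=-w_{uv}$, so one is left with the identity
\[
  f_{uv}(i)-f_{uv}(i+1)=\bigl(p(v^{i+1})-p(v^i)\bigr)-w_{uv}.
\]

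Both equivalences in the claim are then read off from the sign of the right-hand side: $f_{uv}(i)>f_{uv}(i+1)$ holds exactly when $\bigl(p(v^{i+1})-p(v^i)\bigr)-w_{uv}>0$, i.e., when $p(v^{i+1})-p(v^i)>w_{uv}$, and symmetrically $f_{uv}(i)<f_{uv}(i+1)$ holds exactly when $p(v^{i+1})-p(v^i)<w_{uv}$. (As a byproduct one also gets $f_{uv}(i)=f_{uv}(i+1)$ iff $p(v^{i+1})-p(v^i)=w_{uv}$, which is what makes the threshold $\gamma_{uv}$ in statement~(4) behave well at ties.) The displayed identity in fact holds for every $i$ for which $v^i,v^{i+1}$ are defined; the hypothesis $i<\alpha_v$ merely singles out the indices used later, for which both exploded vertices are matched.

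There is no real obstacle here — the claim genuinely ``follows directly from the definition'', and the only thing to watch is the arithmetic in the linear term, so that the difference comes out as exactly $-w_{uv}$ (and not $+w_{uv}$), which is what fixes the directions of the two inequalities. Downstream, this claim is combined with Claim~\ref{lem:price_bound}, which sandwiches the potential gap as $w_{u'v}\le p(v^{i+1})-p(v^i)\le w_{uv}$ for the edges realizing $p(v^{i+1})$ and $p(v^i)$: chaining those sandwiches over consecutive $i$ shows the gaps $p(v^{i+1})-p(v^i)$ are non-increasing in $i$, and together with the present claim this yields that $f_{uv}(1),\dots,f_{uv}(\alpha_v)$ is unimodal, with $\gamma_{uv}$ the first index at which the gap drops to at most $w_{uv}$.
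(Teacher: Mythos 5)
Your proof is correct and is exactly what the paper intends by ``follows directly from the definition of $f_{uv}$'': subtracting the two values cancels $d(u)+p(u)$, leaves $f_{uv}(i)-f_{uv}(i+1)=(p(v^{i+1})-p(v^i))-w_{uv}$, and the two equivalences are read off from the sign. The paper does not spell out the algebra, but there is no alternative route here, so this is the same argument made explicit.
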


Now, the fourth statement in the Proposition follows from the
following statements: For any integer $i<\alpha_v$,
\squishlist
\item[(i)] if $f_{uv}(i)>f_{uv}({i+1})$, then
$f_{uv}(j) \geq f_{uv}(j+1)$ for any integer $j<i$, and
\item[(ii)] if $f_{uv}(i)<f_{uv}({i+1})$, then
$f_{uv}(j) \leq f_{uv}(j+1)$ for any integer $i\leq j \leq \alpha_v.$
\squishend

To prove the first statement, let $u'$ be such that $p(u')+i\cdot
w_{u'v}=p(v^i)$. If $f_{uv}(i)>f_{uv}({i+1})$, then
$$p(v^i)-p(v^{i-1})\geq w_{u' v}\geq p(v^{i+1})-p(v^i)> w_{uv}$$
where the first two inequalities follow from
Claim~\ref{lem:price_bound} and the third inequality follows from
Claim~\ref{lem:distance_implication}. It then follows from
Claim~\ref{lem:distance_implication} that $f_{uv}(i-1)>f_{uv}(i)$.
The first statement follows by repeating the argument above. The
second statement can be proved similarly. This completes the proof
of the fourth statement.
\end{proof}


\subsection{Data structure}\label{sec:datastructure}

\paragraph{Specification} Let us first redefine the problem
so that we can talk about the data structure in a more general way.
We show how to use this data structure for the semi-matching problem
in the next section.

Let $n$ and $N$ be positive integers and, for any integer $i$,
define $[i]=\{1, 2, \ldots, i\}$. We would like to maintain at most
$n$ functions $f_1, f_2, \ldots, f_n$ mapping $[N]$ to a set of
positive reals. We assume that $f_i$ is given as an {\em oracle},
i.e., we can get $f_i(x)$ by sending a query $x$ to $f_i$ in $O(1)$
time.

Let $L$ and $S$ be a subset of $[N]$ and $[n]$, respectively. (As we
will see shortly, we use $L$ to keep the numbers left undeleted in
the process and $S$ to keep the functions inserted to the data
structure.) Initially, $L=[N]$ and $S=\emptyset$. For any $x\in
[N]$, let $f^*_S(x)=\min_{f_i\in S} f_i(x)$.
We want to construct a data structure $\cal H$ that supports the
following operations.

\squishlist

\item {\bf{\sc AccessMin}($\cal H$)}: Return $x\in L$ with minimum value $f^*_S$, i.e., $x=\arg\min_{x\in L}
f^*_S(x)$.

\item {\bf{\sc Insert}($f_i$, $\cal H$)}: Insert $f_i$ to
$S$.

\item {\bf {\sc DeleteMin}($\cal H$)}: Delete $x$ from $L$ where $x$
is returned from {\sc AccessMin($\cal H$)}.

\squishend

{\bf Properties:}
We assume that $f_1, f_2, \ldots$ have the following properties.

\squishlist
\item For all $i$, $f_i$ is {\em unimodal}, i.e., there is some
$\gamma_i\in [N]$ such that $f_i(1)\geq f_i(2)\geq \ldots \geq
f_i(\gamma_i) \leq f_i(\gamma_i+1)\leq f_i(\gamma_i+2)\leq \ldots
\leq f_i(N)\,.$
We assume that $\gamma_i$ is given along with $f_i$.

\item We also assume that each $f_i$ comes along with a
linear function $g_i$ where, for any $x\in [N]$, $g_i(x)=x\cdot
w_i+d_i$, for some $w_i$ and $d_i$. These linear functions have a
property that $f_i(x)=f^*_S(x)$ if and only if $g_i(x)=g^*_S(x)$,
where $g^*_S(x)=\min_{i\in S} g_i(x)$.

\item Finally, we assume that once $x$ is deleted from $L$, $f^*_S(x)$ will never change,
even after we add more functions to $S$.

\squishend
%
%
For simplicity, we also assume that $w_i\neq w_j$ for all $i\neq j$.
This assumption can be removed by taking care of the case of equal
weight in the insert operation.
We now show that there is a data structure such that every operation
can be done in $O(\log n)$ time.

\paragraph{Data structure design} We have two data structures to
maintain the information of $f_i$'s and $g_i$'s.
First, we create a data structure $T_g$ to maintain an ordered
sequence $g_{i_1}, g_{i_2}, \ldots$ such that $w_{i_1}\geq w_{i_2}\geq
\ldots$. We want to be able to insert a new function $g_i$ to $T_g$ in
$O(\log n)$ time. Moreover, for any $w$, we want to be able to find
$w_{i_{j}}$ and $w_{i_{j+1}}$ such that $w_{i_j}\leq w< w_{i_{j+1}}$
in $O(\log n)$ time. Such $T_g$ can be implemented by a balanced
binary search tree, e.g., an AVL tree.

Observe that the linear functions $g_{i_1}, g_{i_2}, \ldots$ appear in
the lower envelope in order, i.e., if $g_{i_j}(x)\geq
g_{i_{j+1}}(x)$, then $g_{i_j}(y)\geq g_{i_{j+1}}(y)$ for any $y>x$.
Therefore, we can use data structure $T_g$ to maintain the range of
values such that each $g_i$ (and therefore $f_i$) is in the lower
envelope. That is, we use $T_g$ to maintain $x_1\leq y_1\leq x_2\leq
y_2 \leq \ldots$ such that $g_i(x)=g^*_S(x)$ for all $i$ and
$x_i\leq x\leq y_i$).

Consider the value $\min_{x\in\{x_i, x_i+1, \ldots, y_i\}\cap L} f_i(x)$.
Since $f_i$ is unimodal, the minimum value of $f_i(x)$ over
$\{x_i, x_i+1, \ldots, y_i\}\cap L$ attains at the point closest
to $\gamma_i$ either from the left or from the right.
Thus, we can use two pointers $p_i$ and $q_i$ such that $x_i\leq
p_i\leq \gamma_i\leq q_i\leq y_i$ to maintain the minimum value of
$f_i$ from the left and right of $\gamma_i$, i.e.,
the minimum value $\min_{x\in\{x_i, x_i+1, \ldots, y_i\}\cap L} f_i(x)$
is either $f_i(p_i)$ or $f_i(q_i)$.
Finally, we use a binary heap $B$ to store the values $f_1(p_1),
f_2(p_2), \ldots$ and $f_1(q_1), f_2(q_2), \ldots$ so that we can search
and delete the minimum among these values in $O(\log n)$ time.

More details of the implementation of each operation are the
followings.

\squishlist

\item \textbf{{\sc AccessMin}($\mathcal{H}$)}: This operation is done by
returning the minimum value in $B$. This value is $\min (f_1(p_1),
f_2(p_2), \ldots, f_1(q_1), f_2(q_2), \ldots) = \min_{x\in L} f^*_S(x).$

\item \textbf{{\sc Insert}($f_i$, $\mathcal{H}$)}: First, insert
$g_i$ to $T_g$ which can be done as follows. Let the current ordered
sequence be $g_{i_1}, g_{i_2}, \ldots$. In $O(\log n)$ time, we find
$g_{i_j}$ and $g_{i_{j+1}}$ such that $w_{i_j}\leq w_i<w_{i_{j+1}}$
and insert $g_i$ between them. Moreover, we update the regions for which
$g_{i_j}$, $g_i$, and $g_{i_{j+1}}$ are in the lower envelope of
$g^*_S$, i.e., we get the values $y_{i_j}, x_i, y_i, x_{i_{j+1}},
y_{i_{j+1}}$ (note that $y_{i_j}\leq x_i \leq y_i\leq
x_{i_{j+1}}\leq y_{i_{j+1}}$).

Next, we deal with the pointers $p_i$ and $q_i$: We set
$p_i=\min(\gamma_i, y_i)$ and $q_i=\max(\gamma_i, x_i)$. (The
intuition here is that we would like to set $p_i=q_i=\gamma_i$ but
it is possible that $\gamma_i<x_i$ or $\gamma_i>y_i$ which means
that $\gamma_i$ is not in the region that $g_i$ is in the lower
envelope $g^*_S$.) Finally, we also update $p_{i_j}$ and
$q_{i_{j+1}}$: $p_{i_j}=\min(p_{i_j}, x_i)$ and
$q_{i_{j+1}}=\max(q_{i_{j+1}}, y_i)$. Figure~\ref{fig:insert} shows
an effect of inserting a new function.


\begin{figure}
\begin{center}
\subfigure[Before inserting $f_3$] {
\includegraphics[width=0.4\textwidth, clip=true, trim= 4cm 5.8cm 8cm 1cm]{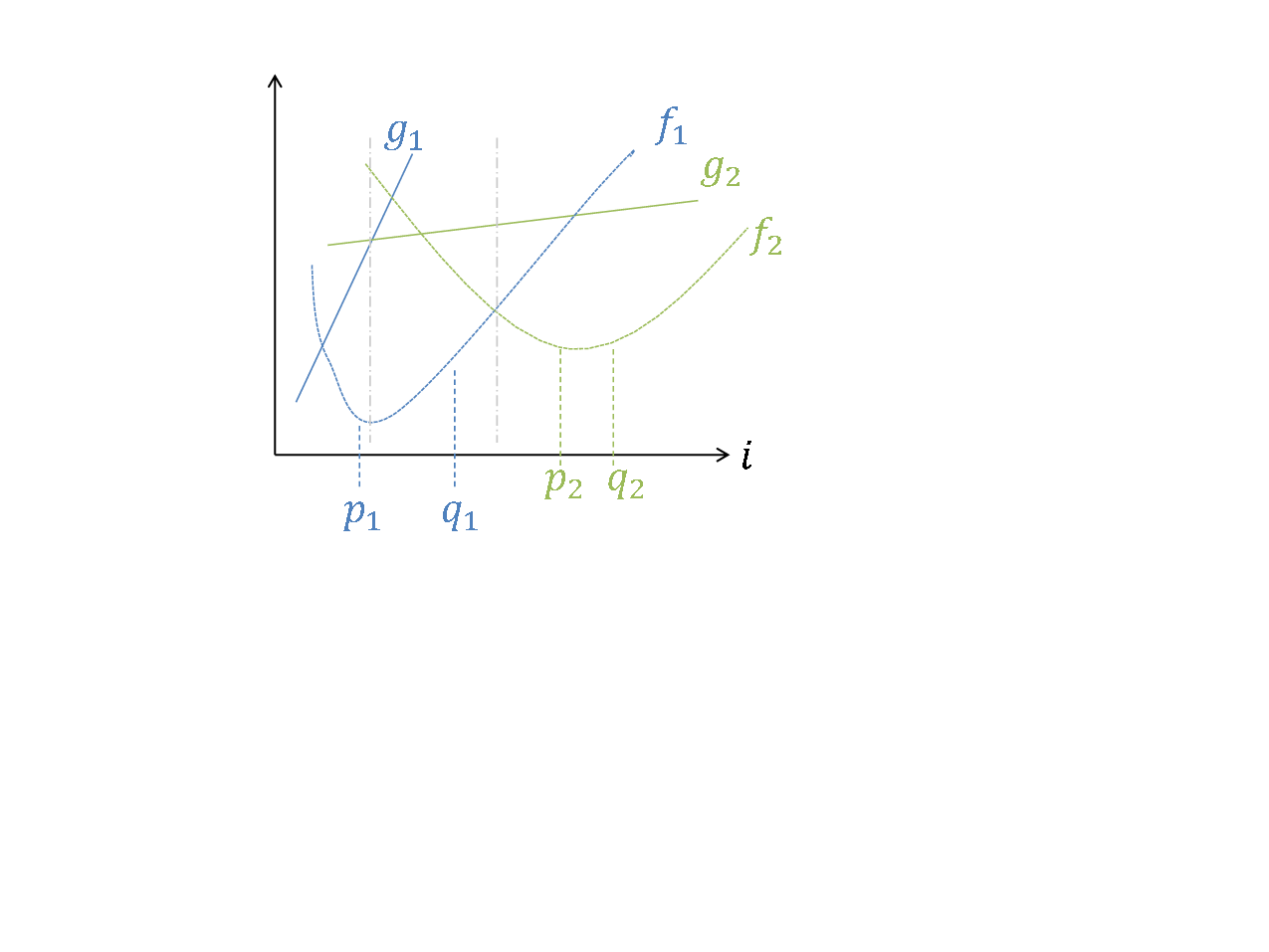}
\label{subfig:insert1} }
\subfigure[After inserting $f_3$ (value of $p_1$ is changed)] {
\includegraphics[width=0.4\textwidth, clip=true, trim= 4cm 5.8cm 8cm 1cm]{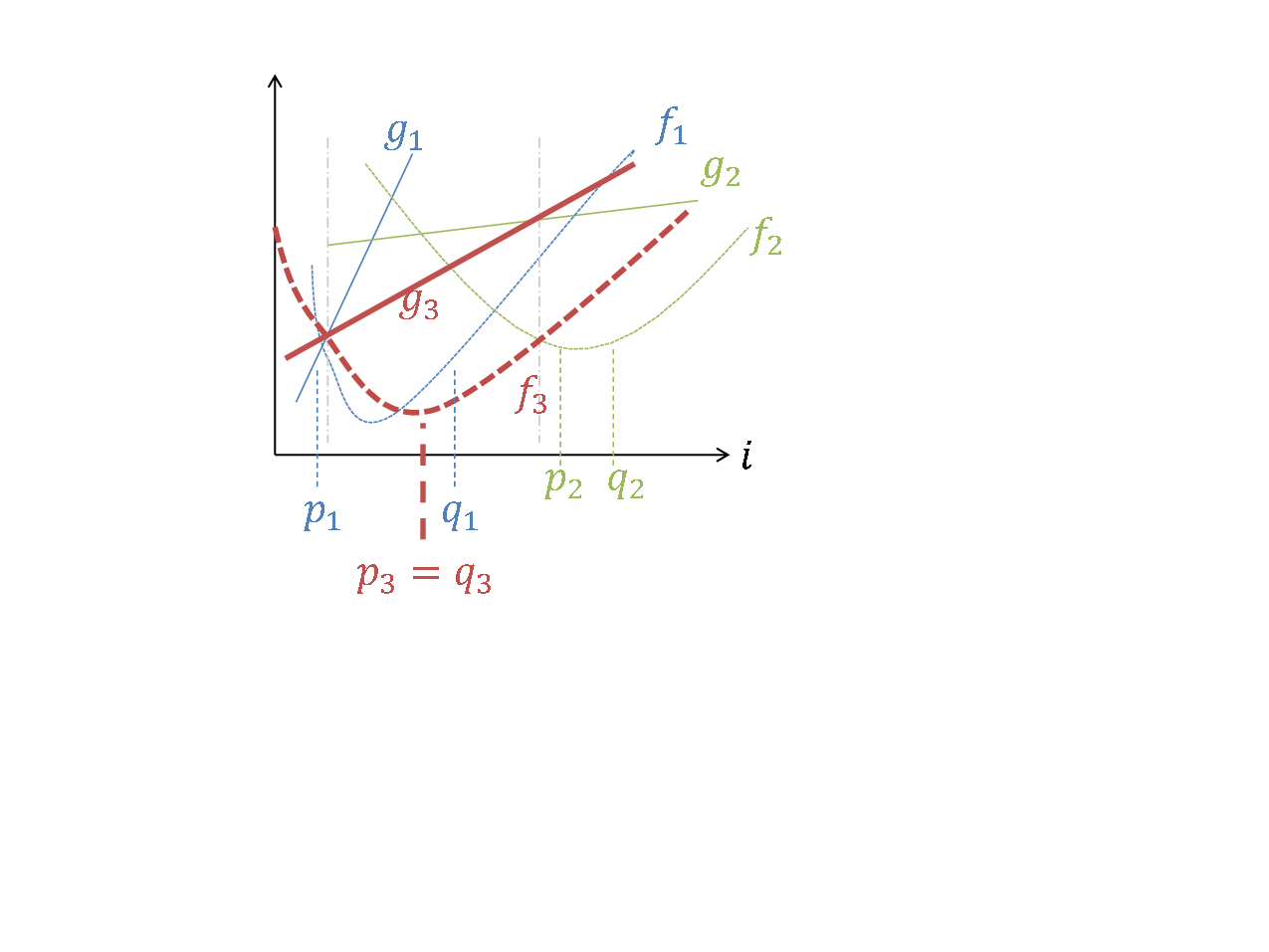}
\label{subfig:insert2} }
\caption{Inserting a new function}\label{fig:insert}
\end{center}
\end{figure}

We note one technical detail here: It is possible that $p_i$ is
already deleted from $L$. This implies that there is another
function $f_{i_{j'}}$ such that $f_{i_{j'}}(p_i)=f_i(p_i)$ (since we
assume that if $p_i$ is already deleted, then $f^*_S(p_i)$ will never
change even when we add more functions to $S$). There are two cases:
$j'<j$ or $j'>j$. For the former case, we know that
$f_{i_{j'}}(p_i-1)<f_i(p_i-1)$ since $w_{j'}>w_j$ and thus we simply
do nothing ($p_i$ will never be returned by {\sc AccessMin}). For
the latter case, we know that $f_{i_{j'}}(p_i-1)>f_i(p_i-1)$ and
thus we simply set $p_i$ to $p_i-1$.
We deal with the same case for $q_i$ similarly.

\item\textbf{{\sc DeleteMin}($\mathcal{H}$)}: We delete the node
with minimum value from $B$ (which is the one on the top of the heap).
This deleted node corresponds to one of the values $f_1(p_1),
f_2(p_2), \ldots, f_1(q_1), f_2(q_2), \ldots$. Assume that $f_i(p_i)$
(resp. $f_i(q_i)$) is such value. We insert a node with value
$f_{i}(p_i-1)$ (resp. $f_{i}(q_i+1)$).

\squishend


\subsection{Using the data structure for semi-matching
problem}\label{sec:using}

For any right vertex $v$, we construct a data structure $H_v$ as in
Section~\ref{sec:datastructure} to maintain $f_{uv}$, which comes
along with $g_{uv}$, for all neighbors of $v$. These functions
satisfy the properties above, as shown in
Section~\ref{sec:properties}. (We note that once $x$ is deleted,
$f_{*v}(x)$ will never change since this corresponds to adding a
vertex $v_x$ to the shortest path tree
with distance $f_{*v}(x)$.) 

The last issue is how to find $\gamma_{uv}$, the lowest point of an
edge $uv$ quickly.
We now show an algorithm that finds $\gamma_{uv}$, for every edge
$uv\in E$ in time $O(|V|+|E|)$ \textit{in total}. This algorithm can
be run before we start each iteration of the main algorithm (i.e.,
above Line~\ref{line:begin_iteration} of Algorithm~\ref{algo:EKT}).
To derive such algorithm, we need the following observation.
\begin{lemma}\label{lem:order_m}
Consider a vertex $v\in V$.
Let $u_1,u_2,\ldots,u_{\deg(v)}$ be vertices of $U$ incident to $v$,
where $w_{u_1v}\geq w_{u_2v} \geq\ldots\geq w_{u_{\deg(v)}v}$.
Then
$\gamma_{u_1v}\leq\gamma_{u_2v}\leq\ldots\leq\gamma_{u_{\deg(v)}v}$.
\end{lemma}
\begin{proof}
It suffices to show that if $w_{u_iv}\geq w_{u_{i+1}v}$,
then $\gamma_{u_iv}\leq\gamma_{u_{i+1}v}$.
We prove this by contrapositive.
By Claim~\ref{lem:distance_implication}, we conclude that
$\gamma_{u_iv}$ is the minimum integer $i\in [\deg(v)]$ such that
$p(v^{\gamma_{u_iv}+1})-p(v^{\gamma_{u_iv}})\leq w_{u_iv}$, and for any
$j<\gamma_{u_iv}$, $p(v^{j+1})-p(v^{j})>w_{u_iv}$.
Thus, if $\gamma_{u_iv}>\gamma_{u_{i+1}v}$, then
$w_{u_{i+1}v}\geq p(v^{\gamma_{u_{i+1}v}+1})-p(v^{\gamma_{u_{i+1}v}}) >
w_{u_{i}v}$.
This completes the proof.
\end{proof}

\paragraph{Algorithm} The following algorithm finds $\gamma_{uv}$ for all
$uv\in E$. First, in the preprocessing step (which is done once
before we begin the main algorithm), we order edges incident to $v$
decreasingly by their weights, for every vertex $v\in V$. This
process takes $O(\deg(v)\log(\deg(v)))$ time. We only have to compute
$\gamma_{uv}$ once, so this process does not affect the overall
running time.

Next, for any $v\in V$, suppose that the list is $(u_1,
u_2,\ldots,u_{\deg(v)})$. Since $w_{u_1}\geq w_{u_2}\geq\ldots\geq
w_{\deg(v)}$, it implies that $\gamma_{u_1v}\leq \gamma_{u_2v}\leq
\ldots \leq\gamma_{u_{\deg(v)}v}$ by Lemma~\ref{lem:order_m}. So, we
first find $\gamma_{u_1v}$ and then $\gamma_{u_2v}$ and so on. This
step takes $O(\deg(v))$ for each $v\in V$ and $O(m)$ in total.
Therefore, the running time for computing the minimum point
$\gamma_{uv}$'s is $O(m\log n)$.

\bigskip

We have now designed our data structure for handling the special
structure of the graph $\hat G$.
This allows us to implement the EKT algorithm on the graph $\hat G$
while the algorithm only has to read the structure of the graph $G$.
Thus, we solve the weighted semi-matching problem in $O(nm\log{n})$
time.

\section{Unweighted semi-matching}
\label{sec:unweighted}

In this section, we present an algorithm that finds the optimal
semi-matching in unweighted graph in $O(m\sqrt{n}\log n)$ time.



\subsection*{Overview}
Our algorithm consists of the following three steps.

In the first step, we reduce the problem to the min-cost flow
problem, using the same reduction from Harvey et al.~\cite{HLLT06}.
(See Figure~\ref{fig:tran-sample}.) The details are provided in
Section~\ref{subsect:min-cost-flow}. We note that the flow is
optimal if and only if there is no cost-reducing path (to be defined
later). We start with an arbitrary semi-matching and use this
reduction to get a corresponding flow. The goal is to eliminate all
the cost-reducing paths.

The second step is a divide-and-conquer algorithm used to eliminate
all the cost-reducing paths. We call this algorithm {\sc CancelAll}
(cf.  Algorithm~\ref{alg:CancelAll}).
%
%
The main idea here is to divide the graph into two subgraphs so that
eliminating cost reducing paths ``inside'' each subgraph does not
introduce any new cost reducing paths going through the other.
This dividing step needs to be done carefully.
We treat this in Section~\ref{subsec:main_unweighted_algo}.

Finally, in the last component of the algorithm we deal with
eliminating cost-reducing paths between two sets of vertices
quickly. Naively, one can do this using any unit-capacity max-flow
algorithm, but this does not give an improvement on the running time.
To get a faster algorithm, we observe that the structure
of the graph is similar to a {\em unit network}, where every vertex
has in-degree or out-degree one.  Thus, we get the same performance
guarantee as that of Dinitz's
algorithm~\cite{Dinitz70,Dinitz06}.\footnote{The algorithm is also
known as ``Dinic's algorithm''. See~\cite{Dinitz06} for details.}
Details of this part can be found in
Section~\ref{subsect:cancel-paths}.

After presenting the algorithm in the next three sections, we
analyze the running time in Section~\ref{sect:running-time}. We note
that this algorithm also works in a more general cost function
(discussed in Section~\ref{sect:general}). We also observe that there is
an $O(n^{5/2}\log n)$-time algorithm that arises directly from the
reduction of the weighted case (discussed in
Appendix~\ref{sec:bimatching-algo}). This already gives an
improvement over the previous results but our result presented here
improves the running time further.

\subsection{Reduction to min-cost flow and optimality characterization (revisited)}
\label{subsect:min-cost-flow}

In this section, we review the characterization of the optimality of
the semi-matching in the min-cost flow framework. 
%
%
%
We use the reduction as given in~\cite{HLLT06}. Given a bipartite
graph $G=(U\cup V, E)$, we construct a directed graph $N$ as
follows.  Let $\Delta$ denote the maximum degree of the vertices in
$V$.  First, add a set of vertices, called {\em cost centers},
$C=\{c_1,c_2,\ldots,c_\Delta\}$ and connect each $v\in V$ to $c_i$
with edges of capacity 1 and cost $i$, for all $1\leq i\leq
\deg(v)$. Second, add $s$ and $t$ as a source and sink vertex.  For
each vertex in $U$, add an edge from $s$ to it with zero cost and
unit capacity. For each cost center $c_i$, add an edge to $t$ with
zero cost and infinite capacity. Finally, direct each edge $e \in E$
from $U$ to $V$ with capacity 1 and cost 0. Observe that the new
graph $N$ has $O(n)$ vertices and $O(m)$ edges, and any
semi-matching in $G$ corresponds to a max flow in $N$.

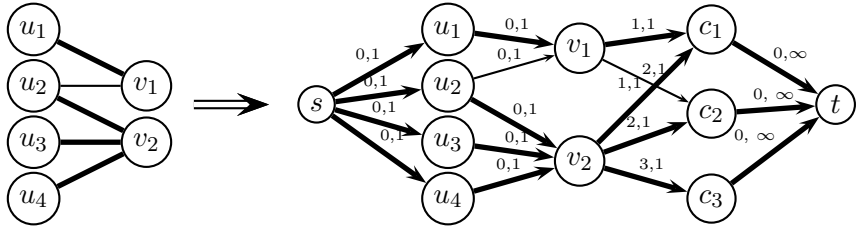
\begin{figure}
\centering
\begin{pspicture}(2, 3)
\psset{arrows=-}
\cnodeput(0,2.25){u1}{$u_1$}
\cnodeput(0,1.5){u2}{$u_2$}
\cnodeput(0,0.75){u3}{$u_3$}
\cnodeput(0,0){u4}{$u_4$}

\cnodeput(1.5,1.5){v1}{$v_1$}
\cnodeput(1.5,0.75){v2}{$v_2$}

\ncline[linewidth=2pt]{u1}{v1} \ncline{u2}{v1}
\ncline[linewidth=2pt]{u3}{v2} \ncline[linewidth=2pt]{u4}{v2}
\ncline[linewidth=2pt]{u2}{v2}
\end{pspicture}
\begin{pspicture}(1.5,3)
\psline[doubleline=true, doublesep=2pt]{->}(0,1.25)(1,1.25)
\end{pspicture}
\begin{pspicture}(8,3)
\psset{arrows=->}
\cnodeput(0,1.25){s}{$s$}
\cnodeput(1.75,2.25){u1}{$u_1$}
\cnodeput(1.75,1.50){u2}{$u_2$}
\cnodeput(1.75,0.75){u3}{$u_3$}
\cnodeput(1.75,0){u4}{$u_4$}

\cnodeput(3.5,2){v1}{$v_1$}
\cnodeput(3.5,0.5){v2}{$v_2$}

\cnodeput(5.25,2.25){c1}{$c_1$}
\cnodeput(5.25,1.13){c2}{$c_2$}
\cnodeput(5.25,0){c3}{$c_3$}

\cnodeput(6.9,1.25){t}{$t$}

\ncline[linewidth=2pt]{s}{u1} \Aput[0]{\tiny{0,1}}
\ncline[linewidth=2pt]{s}{u2} \Aput[0]{\tiny{0,1}}
\ncline[linewidth=2pt]{s}{u3} \Aput[0]{\tiny{0,1}}
\ncline[linewidth=2pt]{s}{u4} \Aput[0]{\tiny{0,1}}
\ncline[linewidth=2pt]{u1}{v1} \Aput[0]{\tiny{0,1}}
\ncline{u2}{v1} \Aput[0]{\tiny{0,1}}
\ncline[linewidth=2pt]{u3}{v2} \Aput[0]{\tiny{0,1}}
\ncline[linewidth=2pt]{u4}{v2} \Aput[0]{\tiny{0,1}}
\ncline[linewidth=2pt]{u2}{v2} \Aput[0]{\tiny{0,1}}
\ncline[linewidth=2pt]{v1}{c1} \Aput[0]{\tiny{1,1}}
\ncline{v1}{c2} \Aput[0]{\tiny{2,1}}
\ncline[linewidth=2pt]{v2}{c1} \Aput[0]{\tiny{1,1}}
\ncline[linewidth=2pt]{v2}{c2} \Aput[0]{\tiny{2,1}}
\ncline[linewidth=2pt]{v2}{c3} \Aput[0]{\tiny{3,1}}
\ncline[linewidth=2pt]{c1}{t} \Aput[0]{\tiny{0,$\infty$}}
\ncline[linewidth=2pt]{c2}{t} \Aput[0]{\tiny{0, $\infty$}}
\ncline[linewidth=2pt]{c3}{t} \Aput[0]{\tiny{0, $\infty$}}


\end{pspicture}

\medskip
\caption{Reduction to the min-cost flow problem. Each edge is
labeled with \textbf{(cost, capacity)} constraint. Thick edges
either are matching edges or contain the flow.}
\label{fig:tran-sample}
\end{figure}

Observe that the new graph $N$ contains $O(n)$ vertices and $O(m)$
edges. It can be seen that any semi-matching in $G$ corresponds to a
max flow in $N$. (See example in Figure~\ref{fig:tran-sample}.)
Moreover, Harvey et al.~\cite{HLLT06} proved that an optimal
semi-matching in $G$ corresponds to a min-cost flow in $N$; in other
words, the reduction described above is correct.
%
Our algorithm is based on observation that the largest cost is
$O(|U|)$. This allows one to use the cost-scaling framework to solve
the problem.

Now, we review an optimality characterization of the min-cost flow.
We need to define a {\em cost-reducing path} first.
Let $R_f$ denote the residual graph of $N$ with respect to a flow
$f$.
We call any path $p$ from a cost center $c_i$ to $c_j$ in $R_f$
an {\em admissible path} and call $p$ a {\em cost-reducing path} if
$i>j$. A cost-reducing path is one-to-one corresponding to a
negative cost cycle implying the condition for the minimality of
$f$.  Harvey et al.~\cite{HLLT06} proved the following.

\begin{lemma}[\cite{HLLT06}]
A flow $f$ is a min-cost flow in $N$ if and only if there is no
cost-reducing path in $R_f(N)$. \label{lmm:opt-no-aug}
\end{lemma}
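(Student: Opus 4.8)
The statement to prove is the standard optimality characterization of min-cost flow specialized to the network $N$: a flow $f$ is a min-cost flow if and only if $R_f(N)$ contains no cost-reducing path. The plan is to invoke the classical fact that $f$ is a min-cost flow (among flows of the same value) if and only if its residual graph $R_f$ contains no negative-cost cycle, and then show that in the specific network $N$ the negative-cost cycles are exactly the cost-reducing paths. First I would recall why negative cycles characterize optimality: if $R_f$ has a negative cycle, pushing a unit of flow around it (possible since all residual capacities along it are positive, and in $N$ every forward capacity is $1$ or $\infty$ and backward residual capacities are integral) yields a flow of the same value with strictly smaller cost, so $f$ is not optimal; conversely, if $f$ is not optimal, then for an optimal $f^*$ of the same value the difference $f^*-f$ decomposes into cycles in $R_f$, and since $\cost(f^*)<\cost(f)$ at least one of these cycles has negative cost.

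Next I would translate ``negative cycle in $R_f(N)$'' into ``cost-reducing path''. Every cycle in $R_f$ must pass through the $s$--$t$ part of the network, but the edges incident to $s$ have cost $0$ (both directions in the residual graph), and the edges from the cost centers to $t$ have cost $0$; moreover the only edges carrying nonzero cost are the $v\to c_i$ edges of cost $i$ (and their residual reverses of cost $-i$). A cycle that uses the $s$ or $t$ vertices can be ``cut'' at $t$: any negative cycle can be assumed to traverse $t$ exactly once (cycles not using $t$ would have to stay in the $U\cup V\cup C$ layers, but there are no edges among cost centers and no edge leaving $C$ except to $t$, so every cycle visits $t$), and removing $t$ together with its two zero-cost incident edges on the cycle leaves a path from some $c_i$ (the cost center the cycle entered $t$ from, via residual reverse of $c_i\to t$) to some $c_j$ (the one it leaves towards, via $c_j\to t$'s reverse... here I need to be careful about orientation). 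Concretely: a cycle through $t$ enters $t$ along an edge $c_i\to t$ and leaves $t$ along the residual reverse edge $t\to c_j$; deleting $t$ yields a walk from $c_j$ to $c_i$ in $R_f$ whose cost equals the cycle cost minus $0$. This walk, after removing any sub-cycles (which, being in $R_f$, could only help or be handled by taking a minimal one), is a path; its cost is negative iff the cost of the entry/exit is dominated appropriately — and because all intermediate costs telescope through the $v$--$c$ edges, one shows the path cost is negative exactly when $j<i$, i.e., exactly when it is a cost-reducing path in the sense defined (path from $c_i$ to $c_j$ with $i>j$, reversing orientation as needed). I would make the ``cost is negative iff $i>j$'' precise by noting that the only way a $c_i$-to-$c_j$ admissible path accrues cost is by leaving $c_i$ along the reverse of a $v\to c_i$ edge (cost $-i$) and eventually arriving at $c_j$ along a $v'\to c_j$ edge (cost $+j$), with everything in between ($U\to V$ edges and their reverses) having cost $0$; hence total cost is $j-i$, negative precisely when $i>j$.

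The main obstacle I anticipate is bookkeeping the correspondence between negative cycles and cost-reducing paths cleanly — in particular handling cycles that visit $t$ more than once or visit $s$, and verifying that one may always reduce to a simple path between two cost centers without changing the sign of the cost. I would address this by arguing that a minimum-cost (most negative) cycle in $R_f$ can be taken to be simple, then observing structurally that a simple cycle in $N$'s residual graph visits $t$ exactly once (since after leaving $t$ one is forced back up through $C$ only via the reverse $v\to c$ edges, through $V$, $U$, and the only way to close up is through $s$ or back to $t$; a careful case analysis on which of $s,t$ the simple cycle uses reduces everything to the $c_i\to\cdots\to c_j$ pattern). Since Lemma~\ref{lmm:opt-no-aug} is attributed to \cite{HLLT06}, I would in fact keep this at the level of a clean exposition of their argument: cite the negative-cycle optimality theorem for min-cost flow as the known ingredient, and spend the bulk of the proof on the structural reduction above that is specific to $N$.
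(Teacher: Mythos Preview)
Your approach is essentially the same as the paper's: reduce to the classical negative-cycle characterization of min-cost flow, then argue that in the specific network $N$ a negative cycle corresponds exactly to a cost-reducing path between two cost centers (closed through $t$), using that the only nonzero-cost edges are the $v\!\to\! c_i$ edges and their residuals.

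The paper handles the bookkeeping you flag as an ``obstacle'' more directly: rather than worrying about cycles that visit $s$ or that visit $t$ multiple times, it simply passes to a \emph{shortest} cost-reducing path (for one direction) or a \emph{shortest} negative cycle (for the other), so that exactly two cost centers appear and the total cost is manifestly $j-i$. Your plan is correct, but you can shorten it considerably by adopting this minimality trick instead of the case analysis you outline; in particular, an admissible path that visits an intermediate cost center $c_k$ splits into two admissible paths, at least one of which is still cost-reducing, so you may always assume no intermediate cost centers and the cost computation becomes immediate.
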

 \begin{proof}
 Note that $f$ is a min-cost flow if and only if there is no negative
 cycle in $R_f$. To prove the ``only if" part, assume that there is
 an cost-reducing path from $c_i$ to $c_j$. We consider the
 shortest one, i.e., no cost center is contained in such path except the
 first and the last vertices. The edges that affect the cost of this
 path are only the first and the last ones because only edges
 incident to cost centers have cost. Cost of the first and the last
 edge is $-i$ and $j$ respectively. Connecting $c_i$ and $c_j$ with
 $t$ yields a cycle of cost $j-i<0$.

 For the ``if" part, assume that there is a negative-cost cycle in
 $R_f$. Consider the shortest cycle which contains only two cost
 centers, say $c_i$ and $c_j$ where $i>j$. This cycle contains an
 admissible path from $c_i$ to $c_j$.
 \end{proof}

Given a max-flow $f$ and a cost-reducing path $P$, one can find a
flow $f'$ with lower cost by augmenting $f$ along $P$ with a unit
flow.  This is later called {\em path canceling}.  We are now ready
to explain our algorithm.
\subsection{Divide-and-conquer algorithm}\label{subsec:main_unweighted_algo}
Our algorithm takes a bipartite graph $G=(U\cup V,E')$ and outputs
the optimal semi-matching.  It starts by transforming $G$ into a
graph $N$ as described in the previous section.  Since the source
$s$ and the sink $t$ are always clear from the context, the graph
$N$ can be seen as a tripartite graph with vertices $U\cup V\cup C$;
later on, we denote $N=(U\cup V\cup C,E)$. The algorithm proceeds by
finding an arbitrary max-flow $f$ from $s$ to $t$ in $N$ which
corresponds to a semi-matching in $G$. This can be done in linear
time since the flow is equivalent to any semi-matching in $G$.

To find the min-cost flow in $N$, the algorithm uses a subroutine
called {\pname{CancelAll}} (cf. Algorithm~\ref{alg:CancelAll}) to
cancel all cost-reducing paths in $f$. Lemma~\ref{lmm:opt-no-aug}
ensures that the final flow is optimal.  

\begin{algorithm}{
\caption{ \pname{CancelAll}$(N=(U \cup V\cup C, E))$
\label{alg:CancelAll}}
\begin{algorithmic}[1]
  \STATE \textbf{if}{$|C|=1$} \textbf{then} halt \textbf{endif}
  \STATE Divide C into $C_1$ and $C_2$ of roughly equal size.
  \STATE \pname{Cancel}($N,C_2,C_1$). \{Cancel all cost-reducing
  paths from $C_2$ to $C_1$\}.\label{line:cancel}
  \STATE Divide $N$ into $N_1$ and $N_2$ where $N_2$ is ``reachable'' from
  $C_2$ and $N_1$ is the rest.
  \STATE Recursively solve \pname{CancelAll}$(N_1)$ and \pname{CancelAll}$(N_2)$.
\end{algorithmic}
}
\end{algorithm}

{\pname{CancelAll}} works by dividing $C$ and solves the problem
recursively. Given a set of cost centers $C$, the algorithm divides
$C$ into roughly equal-size subsets $C_1$ and $C_2$ such that, for
any $c_i\in C_1$ and $c_j \in C_2$, $i<j$.  This guarantees that
there is no cost reducing path from $C_1$ to $C_2$.  Then it cancels
all cost reducing paths from $C_2$ to $C_1$ by calling {\sc Cancel}
algorithm (described in Section~\ref{subsect:cancel-paths}).

It is left to cancel the cost-reducing paths ``inside'' each of
$C_1$ and $C_2$. This is done by partitioning the vertices of $N$
(except $s$ and $t$) and forming two subgraphs $N_1$ and $N_2$.
Then solve the problem separately on each of them.
In more detail, we partition the graph $N$ by letting $N_2$ be a
subgraph induced by vertices reachable from $C_2$ in the residual
graph and $N_1$ be the subgraph induced by the remaining vertices. (Note
that both graphs have $s$ and $t$.) For example, in
Figure~\ref{fig:tran-sample}, $v_1$ is reachable from $c_3$ by the
path $c_3, v_2, u_2, v_1$ in the residual graph.

\begin{lemma}\label{lem:cancelall}
{\pname{CancelAll}}$(N)$ (cf. Algorithm~\ref{alg:CancelAll}) cancels
all cost-reducing paths in $N$.
\end{lemma}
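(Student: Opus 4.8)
The plan is to argue by induction on $|C|$ that \pname{CancelAll}$(N)$ cancels all cost-reducing paths. The base case $|C|=1$ is trivial: an admissible path runs from some $c_i$ to some $c_j$, and a cost-reducing path additionally needs $i>j$, which is impossible when there is only one cost center, so there is nothing to cancel. For the inductive step I would fix the partition $C = C_1 \cup C_2$ with every index in $C_1$ strictly smaller than every index in $C_2$, and the vertex partition into $N_1$ (vertices not reachable from $C_2$ in the residual graph after the call to \pname{Cancel}, together with $s,t$) and $N_2$ (vertices reachable from $C_2$, together with $s,t$). The key claim to establish is that after \pname{Cancel}$(N,C_2,C_1)$ and the recursive calls, no cost-reducing path of any of three possible "types" survives: (a) a path lying entirely within $N_1$, (b) a path lying entirely within $N_2$, and (c) a path that crosses between $N_1$ and $N_2$.

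First I would dispose of type (c). A cost-reducing path goes from $c_i$ to $c_j$ with $i>j$; since indices in $C_1$ are all below those in $C_2$, the only way to have $i>j$ with the endpoints "separated" is $c_i\in C_2$ and $c_j\in C_1$ — but all such paths were cancelled by the explicit call \pname{Cancel}$(N,C_2,C_1)$ on line~\ref{line:cancel}, and I must check this property is preserved by the subsequent recursive calls. Here is where the careful choice of the $N_1/N_2$ split does its work: $N_2$ consists of exactly the vertices reachable from $C_2$ in the residual graph, so after \pname{Cancel} there is no residual arc leaving $N_2$ into $N_1\setminus\{s,t\}$ (otherwise that vertex would be reachable from $C_2$, contradiction); hence \pname{CancelAll}$(N_1)$ only reroutes flow among $N_1$-vertices and cannot create a residual path from $C_2$ into $N_1$. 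Symmetrically, a residual path from $C_2$ to $C_1$ would have to leave $N_2$ at some point, which is impossible, so \pname{CancelAll}$(N_2)$ also cannot reintroduce a $C_2$-to-$C_1$ cost-reducing path. A path from $C_1$ to $C_2$ is never cost-reducing by the index inequality, so type (c) is fully handled. This crossing analysis — showing the divide genuinely decouples the two halves — is the step I expect to be the main obstacle, since it requires pinning down exactly which residual arcs can exist across the cut and checking both recursive calls preserve the "no $C_2\to C_1$ residual path" invariant; one subtlety is that augmenting within $N_1$ can change residual arc directions, so I need the argument to be about reachability from $C_2$, not about individual arcs.

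For types (a) and (b) I would invoke the induction hypothesis, but only after verifying that $N_1$ and $N_2$ are legitimate instances on which the cost structure is inherited: each is a flow network with its own cost centers $C_1$ (resp. $C_2$), each of strictly smaller size, and a cost-reducing path confined to $N_i$ uses only cost centers in $C_i$ and has exactly the same cost as in $N$ (the intermediate vertices carry no cost), so it is a cost-reducing path in the subinstance and is cancelled by \pname{CancelAll}$(N_i)$. Finally I would note that a cost-reducing path in the \emph{final} $N$ with both endpoints in $C_1$ must lie entirely in $N_1$ — it cannot use a vertex of $N_2\setminus\{s,t\}$, again because there is no residual route back from $N_2$ to $C_1$ — and symmetrically for $C_2$; combined with the type-(c) argument this exhausts all cases, so by Lemma~\ref{lmm:opt-no-aug} the resulting flow is optimal, completing the induction.
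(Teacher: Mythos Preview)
Your proof is correct and follows essentially the same approach as the paper: both hinge on the observation that, by maximality of the reachable set, all residual cross arcs go from $N_1$ to $N_2$ and are untouched by the recursive calls, so $C_2$-to-$C_1$ paths stay eliminated while $C_i$-to-$C_i$ paths are confined to $N_i$ and handled recursively. The paper packages the confinement of $C_1$-to-$C_1$ paths as a separate claim (that such a path cannot touch the reachable set $S$ immediately after \textsc{Cancel}, else it would yield a $C_2$-to-$C_1$ path), but this is the same fact you use when you say there is no residual route back from $N_2$ to $C_1$.
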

\begin{proof}
Recall that all cost-reducing paths from $C_2$ to $C_1$ are canceled
in line~\ref{line:cancel}. Let $S$ denote the set of vertices
reachable from $C_2$.

\begin{claim}
After line~\ref{line:cancel}, no admissible paths between two cost
centers in $C_1$ intersect $S$. \label{claim:no-intersect}
\end{claim}
\begin{proof}
Assume, for the sake of contradiction, that there exists an
admissible path from $x$ to $y$, where $x,y\in C_1$, that contains a
vertex $s\in S$. Since $s$ is reachable from some vertex $z\in C_2$,
there must exist an admissible path from some vertex in $z$ to $y$;
this leads to a contradiction.
\end{proof}

This claim implies that, in our dividing step, all cost-reducing
paths between pairs of cost centers in $C_1$ remain entirely in
$N_1$. Furthermore, vertices in any cost reducing path between pairs
of cost centers in $C_2$ must be reachable from $C_2$; thus, they
must be inside $S$. Therefore, after the recursive calls, no
cost-reducing paths between pairs of cost centers in the same
subproblems $C_i$ are left.  The lemma follows if we can show that
in these processes we do not introduce more cost-reducing paths from
$C_2$ to $C_1$. To see this, note that all edges between $N_1$ and
$N_2$ remain untouched in the recursive calls.  Moreover, these
edges are directed from $N_1$ to $N_2$, because of the maximality of
$S$. Therefore there is no admissible path from $C_2$ to $C_1$.
\end{proof}

\subsection{Canceling paths from $C_2$ to $C_1$}
\label{subsect:cancel-paths} In this section, we describe an
algorithm that cancels all admissible paths from $C_2$ to $C_1$ in
$R_f$, which can be done by finding a max flow from $C_2$ to $C_1$.
To simplify the presentation, we assume that there is a super-source
$s$ and super-sink $t$ connecting to vertices in $C_2$ and in $C_1$,
respectively.


To find a maximum flow, observe that $N$ is unit-capacity and every
vertex of $U$ has indegree $1$ in $R_f$. By exploiting these
properties, we show that Dinitz's blocking flow
algorithm~\cite{Dinitz70} can find a maximum flow in
$O(|E|\sqrt{|U|})$ time. The algorithm is done by repeatedly
augmenting flows through the shortest augmenting paths (see
Appendix~\ref{sec:Dinitz}).

\begin{lemma}
Let $d_i$ be the length of the shortest $s-t$ path in the residual
graph at the $i^{th}$ iteration. For all $i$, $d_{i+1}>d_i$.
\end{lemma}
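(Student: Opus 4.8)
The plan is to run the standard BFS‑level (layered‑graph) argument for Dinitz's algorithm on the residual graph $R_f$. Here one ``iteration'' is a single phase: build the level graph $L_f$ of $R_f$ by BFS from $s$, then push a blocking flow through $L_f$. Let $f$ be the flow at the start of iteration $i$ and $f'$ the flow at the start of iteration $i+1$, and for a flow $g$ write $\ell_g(v)$ for the BFS distance from $s$ to $v$ in $R_g$, so that $d_i=\ell_f(t)$ and $d_{i+1}=\ell_{f'}(t)$. The goal is to show $\ell_{f'}(t)>\ell_f(t)$.

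The first and central step is to prove the invariant that \emph{every} edge $(u,v)$ of $R_{f'}$ satisfies $\ell_f(v)\le \ell_f(u)+1$. There are two cases. If $(u,v)$ already belonged to $R_f$, this is just the triangle inequality for BFS distances in $R_f$. If $(u,v)$ is new, it appeared only because iteration $i$ pushed positive flow along the opposite arc $(v,u)$; since every augmenting path used in a phase of Dinitz's algorithm lies inside the level graph $L_f$ (this is exactly how the blocking‑flow subroutine of Appendix~\ref{sec:Dinitz} operates), the arc $(v,u)$ lies in $L_f$, so $\ell_f(u)=\ell_f(v)+1$, and hence $\ell_f(v)=\ell_f(u)-1\le \ell_f(u)+1$. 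This case is the main obstacle: one must argue carefully that a newly created residual arc can only be the reverse of an arc that lay on a shortest (level‑graph) augmenting path of the current phase.

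Given the invariant, the conclusion follows by telescoping. Take any $s$–$t$ path $P$ in $R_{f'}$ of length $L$; summing $\ell_f(v)\le \ell_f(u)+1$ over the edges of $P$ gives $\ell_f(t)\le \ell_f(s)+L=L$, so $L\ge d_i$ and therefore $\ell_{f'}(t)\ge d_i$. Suppose now, for contradiction, that $L=d_i$. Then the telescoping inequality is tight on every edge, so each edge $(u,v)$ of $P$ has $\ell_f(v)=\ell_f(u)+1$; by the second case above a new edge instead has $\ell_f(v)=\ell_f(u)-1$, so $P$ uses no new edges and is in fact an $s$–$t$ path entirely inside $L_f$. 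But then $P$ is an augmenting path in the level graph all of whose edges still have residual capacity after iteration $i$, contradicting the fact that iteration $i$ pushed a \emph{blocking} flow in $L_f$. Hence $L>d_i$ for every $s$–$t$ path in $R_{f'}$, i.e.\ $d_{i+1}=\ell_{f'}(t)>d_i$.

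Finally, I would remark that the special structure of $N$ (unit capacities, in‑degree one at the vertices of $U$) plays no role in this lemma; it is only needed to bound the \emph{number} of iterations, which is a separate counting argument. The monotonicity $d_{i+1}>d_i$ holds for any network, and the only non‑routine ingredient is the statement, imported from the description of the blocking‑flow subroutine, that all augmentations within a phase stay in the current level graph.
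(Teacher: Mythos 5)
Your proof is the standard level-graph argument for Dinitz's algorithm and is correct; in particular you correctly interpret an ``iteration'' as one blocking-flow phase, establish the invariant $\ell_f(v)\le\ell_f(u)+1$ for all residual arcs after the phase, telescope it, and rule out equality via the blocking property. The paper itself does not prove this lemma---it simply cites standard texts (Ahuja--Magnanti--Orlin, Schrijver, Tarjan) in Appendix~\ref{sec:Dinitz}---and your argument is precisely the one found there, so the approaches coincide.
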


The lemma can be used to show that Dinitz's algorithm terminates
after $n$ rounds of the blocking flow step, where $n$ is the number
of vertices.  Since after the $n$-th round, the distance between the
source is more than $n$, which means that there is no augmenting
path from $s$ to $t$ in the residual graph.  The number of rounds
can be improved for certain classes of problems.  Even and
Tarjan~\cite{ET75} and Karzanov~\cite{Karzanov73} showed that in unit
capacity networks, Dinitz's algorithm terminates after
$\min(n^{2/3},m^{1/2})$ rounds, where $m$ is the number of edges.
Also, in unit networks, where every vertex has in-degree one or
out-degree one, Dinitz's algorithm terminates in $O(\sqrt{n})$ rounds
(see, e.g., Tarjan's book~\cite{TarjanBook}).
Since the graph $N$ we are considering is very similar to unit
networks, we are able to show that Dinitz's algorithm also terminates
in $O(\sqrt{n})$ in our case.


For any flow $f$, a {\em residual flow} $f'$ is a flow in a residual
graph $R_f$ of $f$.  If $f'$ is maximum in $R_f$, $f+f'$ is maximum
in the original graph.  The following lemma relates the amount of
the maximum residual flow with the shortest distance from $s$ to $t$
in our case.  The proof is a modification of Theorem~8.8
in~\cite{TarjanBook}.

\begin{lemma}\label{lem:modify-Tarjan}
If the shortest $s-t$ distance in the residual graph is $d>4$, the
amount of the maximum residual flow is at most $O(|U|/d)$.
\end{lemma}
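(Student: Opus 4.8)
The plan is to mimic the classical counting argument (Theorem~8.8 in~\cite{TarjanBook}) that bounds the maximum flow in a unit network of large diameter, but carefully adapted to the structure of our residual graph $R_f$, where the ``unit'' property holds only on the $U$-side. First I would recall the blocking-flow/layered-graph picture: since the shortest $s$–$t$ distance is $d$, the residual graph decomposes into distance layers $L_0=\{s\}, L_1, L_2, \ldots, L_d \ni t$, and every residual edge goes either within a layer-to-next-layer step or ``backwards''; a residual flow of value $F$ decomposes into $F$ edge-disjoint $s$–$t$ paths, each of length at least $d$. The key resource to count is the number of vertices in $U$ available to host these paths. Because the paths are vertex-disjoint through $U$ (each $u\in U$ has indegree exactly $1$ in $R_f$, so at most one unit of residual flow can pass through any $u$), each of the $F$ paths uses a distinct set of $U$-vertices.

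Next I would argue that each such path must pass through $\Omega(d)$ distinct vertices of $U$. This is the heart of the matter. In $N$, the vertex classes along any $s$–$t$ walk alternate in a constrained way: $s \to U \to V \to C \to t$, with residual (backward) edges $C\to V$, $V\to U$. So a residual $s$–$t$ path, after leaving $s$, alternates between $U$, $V$, and $C$; in particular between any two consecutive visits to a cost center it must traverse at least one $U$-vertex (going $V\to U$ then $U\to V$), and between the source and the first cost center, and between the last cost center and $t$, there is also at least one $U$-vertex. Since the path has length $\geq d$ and each ``hop'' through the $U$–$V$–$C$ structure has bounded length (a constant number of edges), the path visits $\Omega(d)$ vertices of $U$ — this is where the hypothesis $d>4$ is used, to ensure the path is long enough that the layered structure forces genuinely many $U$-vertices rather than a short path that revisits a few. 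I would make this precise by observing that consecutive $U$-vertices on the path lie in strictly increasing distance layers (the path is a shortest-path-layered decomposition path, so it never decreases in layer index by more than the residual-edge structure allows), hence the number of $U$-vertices on the path is at least $(d - O(1))/O(1) = \Omega(d)$.

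Combining the two observations: the $F$ residual-flow paths are vertex-disjoint on $U$, and each consumes $\Omega(d)$ vertices of $U$, so $F \cdot \Omega(d) \leq |U|$, giving $F = O(|U|/d)$, as claimed. I expect the main obstacle to be the second step — pinning down the precise constant-length ``period'' of the $U$–$V$–$C$ alternation in the residual graph and verifying that a shortest path cannot cheat by staying within a small set of $U$-vertices (which is exactly why indegree-one on $U$ and the $d>4$ threshold are invoked). Once the alternation structure and the layer-monotonicity of $U$-visits are nailed down, the disjointness-plus-counting conclusion is immediate.
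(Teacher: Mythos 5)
Your overall strategy matches the paper's: decompose the maximum residual flow into edge-disjoint $s$--$t$ paths, each of length at least $d$; observe that these paths are vertex-disjoint on the $U$-side because every $u\in U$ has in-degree $1$ in $R_f$; argue that each path must visit $\Omega(d)$ vertices of $U$; and divide. The first, second, and fourth steps are right and are exactly what the paper does.

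The gap is in the third step, and it is the specific claim you flag as ``the heart of the matter'': that between any two consecutive visits to a cost center a path must traverse a $U$-vertex. This is false. In $R_f$ the segment $v\to c_m\to v'$ is a perfectly legal two-edge hop whenever $\deg_M(v)<m\leq\deg_M(v')$ (forward edge $v\to c_m$ unused, backward edge $c_m\to v'$ from the flow on $v'\to c_m$), and such a hop visits no vertex of $U$ at all. A path that ping-pongs $V\to C\to V\to C\to\cdots$ could in principle be long while using very few $U$-vertices, so the alternation argument as you state it does not deliver $\Omega(d)$. Your appeal to ``layer-monotonicity of $U$-visits'' does not rescue this either; the flow decomposition of the maximum residual flow is not required to respect the BFS layers, and the paper does not use layering here.

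What the paper does instead is observe that each path of the decomposition contains the source, the sink, and \emph{exactly two} cost centers (the one in $C_2$ immediately after the super-source and the one in $C_1$ immediately before the super-sink), so that of the $l+1$ vertices on a path of length $l$, all but $4$ lie in $U\cup V$. Bipartiteness then forces these $l-3$ interior vertices to alternate between $U$ and $V$, giving at least $\lfloor(l-3)/2\rfloor\geq(d-4)/2$ vertices of $U$ per path. The role of $d>4$ is precisely to make this lower bound positive; it is not, as you suggest, to rule out short paths that revisit a small set. If you replace your ``one $U$-vertex between consecutive cost centers'' claim with the paper's ``only two cost centers on the whole path, hence the interior is a bipartite $U$--$V$ alternation'' claim, the rest of your argument goes through verbatim.
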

\begin{proof}
A maximum residual flow in a unit capacity network can be decomposed
into a set $\mathcal P$ of edge-disjoint paths where the number of
paths equals the flow value.  Each of these paths are of length
at least $d$.  Clearly, each path contains the source, the sink, and
exactly two cost centers.  Now consider any path $P\in\mathcal P$ of
length $l$.  It contains $l-3$ vertices from $U\cup V$.  Since the
original graph is a bipartite graph, at least $\lfloor(l -
3)/2\rfloor\geq\lfloor(d - 3)/2\rfloor\geq(d-4)/2$ vertices are from
$U$.  Note that each path in $\mathcal P$ contains a disjoint set of
vertices in $U$, since a vertex in $U$ has in-degree one. Therefore,
we conclude that there are at most $2|U|/(d-4)$ paths in $\mathcal
P$. The lemma follows since each path has one unit of flow.
\end{proof}

From these two lemma, we have the main lemma for this section.

\begin{lemma}
{\pname{Cancel}} terminates in $O(|E|\sqrt{|U|})$
time.\label{lem:cancel_run_time}
\end{lemma}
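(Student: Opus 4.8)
The plan is to combine the two preceding lemmas in the standard Dinitz-analysis fashion, tailored to the unit-capacity structure of $N$. First I would recall that \pname{Cancel} is exactly Dinitz's blocking-flow algorithm run on the residual graph $R_f$, augmented with a super-source $s$ feeding $C_2$ and a super-sink $t$ drained from $C_1$. Each phase of Dinitz's algorithm builds the layered graph by BFS from $s$ in $O(|E|)$ time and then computes a blocking flow in the layered graph; since $N$ is a unit-capacity network (indeed close to a unit network, as every vertex of $U$ has in-degree $1$ in $R_f$), a blocking flow can be found in $O(|E|)$ time per phase as well. So the whole cost is $O(|E|)$ times the number of phases, and everything reduces to bounding the number of phases by $O(\sqrt{|U|})$.

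Next I would bound the number of phases. By the first lemma of this section, the shortest $s$–$t$ distance $d_i$ in the residual graph is strictly increasing across phases, so after $4$ phases we may assume $d_i > 4$ and Lemma~\ref{lem:modify-Tarjan} applies. Split the execution at the first phase $i_0$ where $d_{i_0} \geq \sqrt{|U|}$. For the phases before $i_0$: since the $d_i$ are strictly increasing integers starting from a small constant, there are at most $\sqrt{|U|}$ such phases. For the phases from $i_0$ onward: at the start of phase $i_0$ the shortest distance is $d = d_{i_0} > \sqrt{|U|} > 4$, so by Lemma~\ref{lem:modify-Tarjan} the maximum residual flow remaining is $O(|U|/d) = O(\sqrt{|U|})$; since each remaining phase pushes at least one unit of flow (a blocking flow in a graph with an augmenting path is nonzero), there can be at most $O(\sqrt{|U|})$ more phases. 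Adding the two parts gives $O(\sqrt{|U|})$ phases total.

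Combining, \pname{Cancel} runs in $O(|E|) \cdot O(\sqrt{|U|}) = O(|E|\sqrt{|U|})$ time. The main obstacle — and the only genuinely non-routine point — is justifying that a single blocking-flow phase costs only $O(|E|)$ rather than $O(|E|\sqrt{|U|})$ or $O(|V|\cdot|E|)$; this relies on the unit-capacity structure (so that augmenting-path search in the layered network can be charged against edge deletions, saturated edges disappearing from the layered graph and dead ends being pruned) exactly as in the classical analysis cited from Tarjan's book, so I would state it with a reference to Appendix~\ref{sec:Dinitz} rather than reprove it. One should also double-check the cosmetic point that replacing the multi-source/multi-sink instance by the single $s,t$ adds only $O(|C|) = O(|V|) \leq O(|E|)$ edges and does not affect the asymptotics.
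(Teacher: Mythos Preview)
Your proposal is correct and matches the paper's own proof essentially line for line: reduce to bounding the number of phases by $O(\sqrt{|U|})$ via the distance-increase lemma and Lemma~\ref{lem:modify-Tarjan}, then multiply by the $O(|E|)$ cost per blocking-flow phase. The paper's version is terser (it omits your explicit split at $i_0$ and the sanity checks about $d>4$ and the super-source edges), but the argument is the same.
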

\begin{proof}
Since each iteration can be done in $O(|E|)$ time, it is enough to
prove that the algorithm terminates in $O(\sqrt{|U|})$ rounds. The
previous lemma implies that the amount of the maximum residual flow
after the $O(\sqrt{|U|})$-th rounds is $O(\sqrt{|U|})$ units.  The
lemma thus follows because after that the algorithm augments at
least one unit of flow for each round.
\end{proof}

\subsection{Running time}
\label{sect:running-time} The running time of the algorithm is
dominated by the running time of {\pname{CancelAll}}, which can be
analyzed as follows.
Let $T(n,n',m,k)$ denote the running time of the algorithm when
$|U|=n, |V|=n', |E|=m,$ and $|C|=k$.  For simplicity, assume that
$k$ is a power of two.  By Lemma~\ref{lem:cancel_run_time},
{\pname{Cancel}} runs in $O(|E|\sqrt{|U|})$ time. Therefore,
\[
T(n,n',m,k) \leq c\cdot m\sqrt{n} + T(n_1,n'_1,m_1,k/2) +
T(n_2,n'_2,m_2,k/2),
\]
for some constant $c$, where $n_i,n'_i,$ and $m_i$ denote the number
of vertices and edges in $N_i$, respectively.  Recall that each edge
participates in at most one of the subproblems; thus, $m_1+m_2\leq
m$. Observe that the number of cost centers always decreases by a
factor of two. Thus, the recurrence is solved to
$O(\sqrt{n}m\log{k})$. Since $k=O(|U|)$, the running time is
$O(\sqrt{n}m\log{n})$ as claimed.
Furthermore, the algorithm can work with a more general cost function
with the same running time as shown in the next section.

\subsection{Generalizations of an unweighted algorithm}
\label{sect:general}

The problem can be viewed in a slightly more general version. In
Harvey et al.~\cite{HLLT06}, the cost functions for each vertex
$v\in V$ are the  same. We relax this condition, allowing a different
function for each vertex where each function is convex. More
precisely, for each $v\in V$, let $f_v:\integer_{+}\rightarrow\real$
be a convex function, i.e., for any $i$,
$f_v(i+1)-f_v(i)\geq f_v(i)-f_v(i-1)$. The cost for matching $M$ on
a vertex $v$ is $f_v(\deg_M(v))$.  In this convex cost function, the
transformation similar to what described in
Section~\ref{subsect:min-cost-flow} can still be done.
However, the number of different values of $f_v$ is now $O(|E|)$.
So, the size of the set of cost centers $C$ is now upper bounded by
$O(|E|)$ not $O(|U|)$.
Therefore, the running time of our algorithm becomes
$O(|E|\sqrt{|U|}\log{|C|})=O(|E|\sqrt{|U|}\log{|E|})
=O(\sqrt{n}m\log{n})$ (since $|E|\leq n^2$) which is the same as
before.

\section{Extension to balanced edge cover problem}
\label{sec:edge-cover}

The {\em optimal balanced edge cover} problem is defined as follows.
The input to this problem is a simple undirected graph $G =(V,E)$.
An {\em edge cover} $F\subseteq E$ is a set of edges such that every
vertex of $G$ is incident to at least one edge in $F$.
Define the cost of the edge cover $F$ as
$\cost(F)=\sum_{v\in{V}}\cost_F(v)$, where
$\cost_F(v)=\sum_{i=1}^{\deg_F(v)}i$.
(The cost function is the same as that of the unweighted semi-matching
problem\footnote{
We note that the original definition of the
balanced edge cover problem has a function $f : \integer^+
\rightarrow \real^+$ as an input~\cite{HaradaOSY08}. However, it was
shown in \cite{HaradaOSY08} that the optimal balanced edge cover can
be determined independently of function $f$ as long as $f$ is a
strictly monotonic convex function.
In other words, the problem is equivalent to the
one we define here.}.)
The goal in the optimal balanced edge cover problem is to find an edge
cover $F$ with minimum cost.

Observe that any minimal edge cover -- including any optimal balanced
edge cover -- induces a star forest; i.e., every connected component
has at most one vertex of degree greater than one (we call such
vertices centers) and the rest have degree exactly one. For any minimal edge cover $F$, we call a set of vertices $C$ an {\em extended set of centers of $F$} if (1) $C$ contains all centers of $F$, and (2) each connected component in the subgraph induced by $F$ contains exactly one vertex in $C$.
%
%

To solve the balanced edge cover problem using a semi-matching
algorithm, we first make a further observation that if we are given an extended
set of centers of an optimal balanced edge cover, then an optimal balanced edge cover can be
found by simply solving the unweighted semi-matching problem.

\begin{lemma}\label{lem:extended set of centers is enough}
Let $C$ be an extended set of centers of some optimal balanced edge
cover $F$. Let $G'=((V\setminus C)\cup C, E')$ be a bipartite graph
where $E'$ is the set of edges between $V\setminus C$ and $C$ in
$G$. Then any optimal semi-matching in $G'$ (where every vertex in
$V\setminus C$ touches exactly one edge of the semi-matching)
is an optimal balanced edge cover in $G$.
\end{lemma}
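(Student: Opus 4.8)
The plan is to show equivalence of the two optimization problems by exhibiting a cost-preserving (up to a fixed additive constant) bijection between the balanced edge covers of $G$ whose extended center set is $C$ and the semi-matchings of $G'$. First I would fix notation: given a semi-matching $M$ in $G'$, each vertex $u \in V \setminus C$ is matched to exactly one $c \in C$; I form the edge set $F_M = M \cup \{$one edge from $c$ to some neighbor, for each isolated $c\}$, or more cleanly, observe that since $C$ is an \emph{extended} set of centers of an actual balanced edge cover $F$, every $c \in C$ has at least one neighbor in $V \setminus C$ (for centers this is clear; for the one chosen vertex in a no-center component, that component is an edge, so $c$ has a neighbor too — or the component is a single vertex, which cannot happen in an edge cover, so in fact every $c$ has a neighbor in $V\setminus C$). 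Hence $F_M := M$ already covers every vertex of $V \setminus C$, and every $c \in C$ gets covered because... here one must be slightly careful, so the cleaner route is to argue about costs directly.

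The key computation is the cost comparison. For a semi-matching $M$ in $G'$, $\cost(M) = \sum_{c \in C} \deg_M(c)(\deg_M(c)+1)/2$ while for the corresponding edge cover, $c(F) = \sum_{v \in V}\deg_F(v) = |V\setminus C| + \sum_{c\in C}\deg_F(c)$, since each non-center has degree exactly $1$ in a minimal cover. So the two objectives differ: one is quadratic-ish in the center degrees, the other linear. The resolution — and the crux of the lemma — is that Harada et al.'s result (cited in the footnote) says the optimal balanced edge cover is determined independently of the strictly monotone function $f$; equivalently, minimizing $\sum_v f(\deg_F(v))$ over edge covers gives the same optimum for every strictly monotone $f$. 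In particular we may take $f(d) = d(d+1)/2$, which is strictly monotone (indeed strictly convex) on $\integer^+$. Then $\sum_{v\in V} f(\deg_F(v)) = |V\setminus C|\cdot f(1) + \sum_{c\in C} f(\deg_F(c)) = |V\setminus C| + \sum_{c\in C}\deg_F(c)(\deg_F(c)+1)/2$, which equals $|V\setminus C| + \cost(M_F)$ where $M_F$ is the semi-matching in $G'$ obtained by restricting $F$ to edges incident to a non-center (using that each non-center has degree exactly $1$). Conversely every semi-matching $M$ in $G'$, regarded as a set of edges in $G$, covers $V\setminus C$ and covers those $c\in C$ it touches; to make it an edge cover one may need to add back an edge at an untouched $c$, but I would argue an \emph{optimal} semi-matching never leaves a $c\in C$ untouched because $C$ is an extended center set of a cover $F$ whose structure provides, for each $c$, a neighbor that could be cheaply reassigned — more simply, if some optimal $M$ misses $c$, move one non-center from its current center $c'$ (with $\deg_M(c')\ge 2$, which must exist by counting, since $|V\setminus C|\ge |C|$... this inequality itself needs the extended-center property) to $c$, strictly decreasing $\cost$ by convexity, contradiction.

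So the steps in order are: (1) record that in any minimal/optimal balanced edge cover every non-center has degree exactly $1$ and the components are stars; (2) invoke the Harada et al. invariance to replace the cost $\sum_v \deg_F(v)$ by $\sum_v f(\deg_F(v))$ with $f(d)=\binom{d+1}{2}$; (3) establish the map $F \mapsto M_F$ (restrict to edges at non-centers) is a cost-preserving bijection, up to the additive constant $|V\setminus C|$, between balanced edge covers of $G$ with extended center set $C$ and semi-matchings of $G'$ that touch every vertex of $C$; (4) show that an optimal semi-matching in $G'$ automatically touches every $c\in C$, via the reassignment/convexity argument, so that the minimum over \emph{all} semi-matchings of $G'$ is attained by one of these; (5) conclude that an optimal semi-matching of $G'$ corresponds to a balanced edge cover of $G$ of minimum cost among covers with extended center set $C$, and since $F$ (the given optimal balanced edge cover) has extended center set $C$ and is globally optimal, this corresponds to an optimal balanced edge cover of $G$.

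The main obstacle I expect is step (4) — ruling out that an optimal semi-matching of $G'$ leaves some $c \in C$ with degree $0$ — together with the companion fact in step (3) that lets one go from "edge cover restricted to non-centers" back to a semi-matching and vice versa without changing which vertices are centers. Both hinge on the precise meaning of \emph{extended set of centers}: that $C$ contains all true centers of $F$ and exactly one vertex from each center-free component (necessarily a single edge, hence $|V \setminus C| \ge |C|$ with the surplus exactly the number of non-center leaves), so that a counting argument guarantees a donor center with degree $\ge 2$ whenever some center has degree $0$. I would state this inequality and the star-forest structure as a preliminary observation and lean on strict convexity of $f(d) = d(d+1)/2$ for the exchange argument; the rest is bookkeeping.
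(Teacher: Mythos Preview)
Your overall framework is sound and is in fact more explicit than the paper about the cost correspondence (the paper silently uses the same ``replace $f(d)=d$ by $f(d)=\binom{d+1}{2}$'' move you spell out in step~(2)). The genuine gap is exactly where you suspect it, step~(4), and your proposed fix does not work as written. You say: if some $c\in C$ has $\deg_M(c)=0$, then by counting some $c'$ has $\deg_M(c')\ge 2$, so move one of $c'$'s non-centers to $c$. But that move requires an edge in $G'$ from that non-center to $c$, and nothing in the counting argument gives you this. The donor $c'$ may be far from $c$, with none of its leaves adjacent to~$c$.

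The paper's proof repairs precisely this point with an alternating-path argument that you almost have the ingredients for. Starting from the uncovered $c=v_0$, take any $F$-neighbor $v_1\in V\setminus C$ of $v_0$ (this exists because $c$ lies in the extended center set of $F$), then follow the unique $M$-edge from $v_1$ to its center $v_2\in C$. If $\deg_M(v_2)\ge 2$, stop; otherwise take an $F$-neighbor $v_3$ of $v_2$, follow its $M$-edge to $v_4$, and so on. One checks the vertices are distinct (non-centers have degree~$1$ in both $F$ and $M$, and $F$ has no edge inside $C$), so the walk terminates at some $v_{2k}$ with $\deg_M(v_{2k})\ge 2$. The resulting $M$-alternating path from a degree-$0$ center to a degree-$\ge 2$ center is a cost-reducing path for $M$, contradicting optimality. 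This is the missing bridge between ``some donor exists'' and ``a donor is reachable from $c$''; once you have it, the rest of your outline goes through.
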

\begin{proof}
Let $M$ be any optimal semi-matching in $G'$. First, observe that $F$
is also a semi-matching in $G'$. Thus, the cost of $M$ is at most
the cost of $F$. It remains to show that $M$ is an edge cover.
In other words, we will prove that every vertex in $C$ is covered by
$M$.

Assume for the sake of contradiction that there is a vertex $v\in C$
that is not covered by $M$. We show that there exists a
cost-reducing path of $M$ starting from $v$ as follows. (The notion of
cost-reducing path is defined in Section~\ref{sec:unweighted}.)
Starting from $v_0=v$, let $v_1$ be any vertex adjacent to $v_0$ in
$F$. Such $v_1$ clearly exists since $F$ is an edge cover.
Let $v_2$ be a vertex in $C$ adjacent to $v_1$ in $M$. Such $v_2$
exists and is unique since $v_1$ has degree exactly one in $M$.
If $\deg_M(v_2)>1$, then we stop the process. Otherwise, we repeat the
process by finding a vertex $v_3$ adjacent to $v_2$ in $F$ and a
vertex $v_4$ adjacent to $v_3$ in $M$. We repeat this until we find
$v_{2k}$, for some $k$, such that $\deg_M(v_2)>1$.
This process is illustrated in Figure~\ref{fig:semi-cover}.

\begin{figure}
\centering
\includegraphics[scale=0.6]{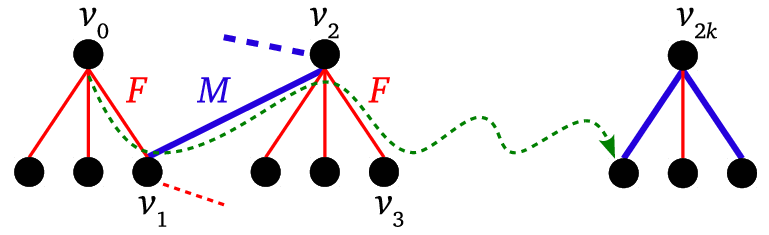}
\caption{A figure illustrates the construction of a cost-reducing path
in Lemma~\ref{lem:extended set of centers is enough}. Solid thin edges
(in red) denote edges in $F$. Solid thick edges (in blue) denote edges
in $M$. Dashed thin edges (in red) represent the fact that vertices
$v_i$, for all odd $i$, have degree exactly one in $F$. Similarly,
dashed thick edges (in blue) represent the fact that vertices $v_i$, for
all even $0<i<2k$, have degree exactly one in $M$ .
}
\label{fig:semi-cover}
\end{figure}

\begin{claim}
All vertices found during the process are distinct.
\end{claim}
\begin{proof}
Let $v_i$ be the first vertex that appears for the second time,
i.e. $v_i=v_j$ for some $j<i$ and all vertices in $\{v_0, \ldots,
v_{i-1}\}$ are distinct. Let $j<i$ be such that $v_i=v_j$.

{\bf Case 1: $i$ is odd.} This means that $v_i\notin C$. It follows
that $v_i$ has degree exactly one in $F$ (this is true for every
vertex that is not in the extended set of centers $C$ of $F$). Also
note that $(v_{j-1},v_j)$ and $(v_{i-1},v_i)$ are both in $F$. Thus,
$(v_{j-1},v_j)=(v_{i-1},v_i)$. This means that $v_{j-1}=v_{i-1}$,
contradicting the assumption that $v_i$ is the first vertex that
appears for the second time.

{\bf Case 2: $i$ is even.} This means that $v_i\in C$. It follows that
$v_i$ has degree exactly one in $M$; otherwise, the process must stop
when $v_j$ is found. As in Case 1, this fact implies that
$v_{j-1}=v_{i-1}$ since $(v_{j-1},v_j)$ and $(v_{i-1},v_i)$ are both in $M$,
contradicting the assumption that $v_i$ is the first vertex that
appears for the second time. The claim is completed.
\end{proof}

%

The above claim implies that the process will stop. Since we stop at
vertex whose degree in $M$ is more than one, the path obtained by this
process is a cost-reducing path of $M$. This contradicts the
assumption that $M$ is an optimal semi-matching.
\end{proof}

It remains to find an extended set of centers. We do this using the
following algorithm.

\paragraph{Algorithm {\sc Find-Center}} First, find a minimum
cardinality edge cover $F$. Then find {\em leveling} of vertices,
denoted by $L_F$, as follows.

First, all center vertices of $F$ (i.e., all vertices with degree more
than one in $F$) are on level $1$.
For $i=1, 2, \ldots$, we define level $i+1$ by considering at any
vertex $v$ not yet assigned to any level. We pick such vertex $v$ in
any order and consider two cases.
\begin{itemize}
\item If $i$ is odd and $v$ shares an edge in $F$ with a vertex
  on level $i$, then we add $v$ to level $i+1$.
\item If is $i$ even, then we add $v$ to level $i+1$ if
  $v$ shares an edge not in $F$ with a vertex on level $i$
  and $v$ does not share an edge in $F$ with any vertex on level
  $i+1$.
\end{itemize}

We output $C$, the set of even-level vertices, as an extended set of centers.
Note that there might be some vertices that are not assigned to any
level in $L_F$.
Figure~\ref{fig:leveling} illustrates the work of the
{\sc Find-Center} algorithm.
We first find a minimum edge cover $F$ (consisting of solid edges).
Vertices $v_1$ and $v_2$, which are the centers of the two stars in $F$,
are in the first level. The leaves of the stars
(i.e., $v_3,\ldots,v_8$) are then in the second level.
Vertices $v_9$ and $v_{10}$ are both adjacent to vertices in the second
level by edges not in $F$. Thus, any of them could be in the third
level. However, since they are adjacent in $F$, they could not
be both in the third level. If we consider $v_9$ before $v_{10}$ in
the algorithm, then $v_9$ will be in level $3$ while $v_{10}$ will be
in level $4$ as in the figure. In this case, $v_{11}$ and $v_{12}$
will not be assigned to any level. In contrast, if we consider
$v_{10}$ first, then $v_{9}$, $v_{10}$, $v_{11}$ and $v_{12}$ will be
in level $4$, $3$, $5$ and $6$, respectively.

\begin{figure}
\centering
\includegraphics[scale=0.6]{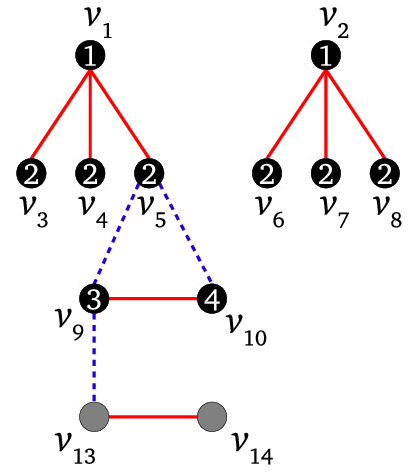}
\caption{An example of algorithm {\sc Find-Center}. The solid edges
  denote edges in the minimum cardinality edge cover $F$, and the
  dashed edges denote edges not in $F$. The numbers in vertices
  denote one possible leveling.}
\label{fig:leveling}
\end{figure}


Now we analyze the running time and show the correctness of algorithm
{\sc Find-Center}. Once we have these, the main claim of this section
follows immediately from
Lemma~\ref{lem:extended set of centers is enough}.

\paragraph{Running time analysis}
An edge cover $F$ can be constructed from a maximum cardinality matching
by adding one edge incident to each uncovered vertex
\cite{gallai1959uep,norman1959amc}.
The maximum cardinality matching in a bipartite graph can be found by
Micali-Vazirani's algorithm~\cite{MV80} in $O(\sqrt{n}m)$ time or by
Harvey's algorithm~\cite{Harvey06} in $O(n^\omega)$ time, where
$\omega$ is a time for computing matrix multiplication.
Thus, $F$ can be found in $O(\sqrt{n}m)$ time by using the first algorithm.
Moreover, finding $L_F$ could be done in a breadth-first manner,
which takes $O(n+m)$ time. Therefore, the time for the
reduction from the balanced edge cover problem to the unweighted
semi-matching problem is $O(\sqrt{n}m)$, implying the total running
time of $O(\sqrt{n}m\log n)$.

\paragraph{Correctness}
We prove the correctness by applying the algorithm BEC1 proposed in
\cite{HaradaOSY08}. This algorithm starts from any minimum edge
cover and keeps augmenting along a \textit{cost-reducing} path until
such path does not exist. Here a cost-reducing path $P$ with respect
to an edge cover $F$ is a path starting from any center vertex $u$,
following any edge in $F$ and then following an edge not in $F$.
The path $P$ keeps using edges in $F$ and edges not in $F$ alternately
until it finally uses an edge not in $F$ and ends at a vertex $v$ such that
$\deg_F(v)\leq \deg_F(u)-2$.
(See \cite{HaradaOSY08} for the formal definition.)
It was shown that BEC1 returns an optimal balanced edge cover.

\begin{lemma}
Let $C$ be a set returned from the algorithm {\sc Find-Center}.
Then $C$ is an extended set of centers of some optimal balanced edge
cover $F^*$. In other words, there exists an optimal balanced edge
cover $F^*$ such that all of its centers are in $C$, and each
connected component (in the subgraph induced by $F^*$) has exactly one
vertex in $C$.
\end{lemma}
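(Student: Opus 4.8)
The plan is to show that the levelling $L_F$ produced by {\sc Find-Center} is compatible with the run of algorithm BEC1 of~\cite{HaradaOSY08}, so that the set $C$ of odd-level vertices is exactly an extended set of centers of the optimal balanced edge cover $F^*$ that BEC1 would produce starting from $F$. First I would recall the structure of BEC1: it begins from a minimum-cardinality edge cover $F$ (the same $F$ that {\sc Find-Center} uses) and repeatedly augments along a cost-reducing path rooted at a center vertex, alternating non-$F$/$F$ edges (in the appropriate parity), until no such path exists, at which point the current cover is optimal and balanced. Each augmentation moves the ``center role'' one step along the path: a leaf that was attached to a high-degree center becomes reattached to a vertex further away whose degree was at least two smaller. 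The key observation is that the vertices touched by these augmentations are precisely the vertices at small levels in $L_F$, and the parity in the definition of $L_F$ (odd levels reached by non-$F$ edges, even levels by $F$ edges, with the tie-breaking clause preventing an $F$-edge inside a single odd level) mirrors exactly the alternation structure of a BEC1 cost-reducing path.

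The main steps, in order, are: (1) Argue that after BEC1 terminates with $F^*$, every center of $F^*$ lies on an odd level of $L_F$ — intuitively, the original centers are on level $1$, and each augmentation can only push a center ``outward'' by two levels (one $F$-step to an even level, then one non-$F$-step to the next odd level), so all centers of $F^*$ sit on odd levels. (2) Argue that within each connected component of $F^*$, the set of odd-level vertices in that component contains at least one vertex — the component's center (or, for a component with no center, a single designated vertex) — and that $C$ selects exactly one vertex per such component. For a component that is a single edge both of whose endpoints ended up on an odd level, the tie-breaking clause ``$v$ does not share an edge in $F$ with any vertex on level $i+1$'' is what guarantees exactly one of the two is in $C$. (3) Handle the leftover vertices: those assigned to no level form components on which $F$ (equivalently $F^*$) already induces an isolated edge or small piece with no center, and {\sc Find-Center} outputs the $F$-edges there directly, so these are consistent with $F^*$. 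Combining (1)–(3), $C$ contains all centers of $F^*$ and exactly one vertex from each centerless component, which is the definition of an extended set of centers.

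The hard part will be step (1)–(2): proving that the levelling is actually \emph{invariant} under the BEC1 augmentations, i.e., that a cost-reducing path rooted at a current center always proceeds along strictly increasing levels and terminates at a vertex whose level has the right parity and whose $F$-degree is at least two less. This requires a careful induction on the number of BEC1 iterations, showing that (a) no cost-reducing path ever ``revisits'' a level or jumps parity incorrectly — which is exactly where the subtle clause in the definition of level $i+1$ (the ``$v$ does not share an edge in $F$ with any vertex on level $i+1$'' condition) is used — and (b) the degree sequence along any BEC1 path is monotone enough that the final reattachment respects the star structure. I would also need the fact, implicit in~\cite{HaradaOSY08} and used above, that the set of centers (and centerless-component representatives) is the same across all optimal balanced edge covers reachable from $F$ by BEC1, so that ``some optimal $F^*$'' in the statement can be taken to be the specific output of BEC1 on input $F$. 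Everything else — the breadth-first computability of $L_F$, the reduction correctness given $C$ — is already established by the preceding lemma and the running-time discussion.
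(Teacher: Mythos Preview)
Your proposal is essentially the same approach as the paper's: both start from the minimum-cardinality edge cover $F$, invoke BEC1 from \cite{HaradaOSY08}, and argue by induction over BEC1 iterations that the parity of each vertex's level in $L_F$ is preserved, so that the odd-level vertices end up being an extended set of centers of the optimal cover $F^*$ that BEC1 outputs. The paper carries out precisely the ``careful induction'' you anticipate, with two refinements you should be aware of: (i) it restricts BEC1 to augment along \emph{shortest} cost-reducing paths, which forces all but the last vertex of the path to sit on levels $1$ and $2$; and (ii) when the terminal vertex lands on level $4$ rather than level $3$, the paper does not show the path respects parity directly but instead either reroutes the path through an adjacent level-$3$ vertex or derives a contradiction with the minimum cardinality of $F$. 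Your sketch is correct in spirit, but the actual invariant maintained is on the levelling $L_{F_i}$ relative to $L_F$, and the argument sometimes modifies the augmenting path rather than showing every shortest path already behaves well.
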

\begin{proof}
Let $F$ be a minimum cardinality edge cover found by the algorithm
{\sc Find-Center}. Consider a variant of the algorithm BEC1 where
we augment along a shortest cost-reducing path. We will show that we
can always augment along the shortest cost-reducing path in such a way
that the parity of vertices' levels never change. To be precise, we
construct a sequence of minimum cardinality edge covers $F=F_1, F_2,
\ldots$ where we obtain $F_i$ from $F_{i-1}$ by augmenting along some
shortest cost-reducing path. By the following process, we claim
that if any vertex is on an odd (even, respectively) level in $L_F$,
then it is on an odd (even, respectively) level in
$L_{F_i}$. Moreover, if a vertex belongs to no level in $L_F$, then it
belongs to no level in $L_{F_i}$.

We prove the claim by induction on $i$.
The claim trivially holds on $F_1=F$.
Inductively, assume that the claim holds on some $F_i$.
Let $P$ be any shortest cost-reducing path with respect to $F_i$.
If there is no such path $P$, then $F^*=F_i$ is an optimal edge
cover, and we are done.
Otherwise, we consider two cases.

\begin{itemize}
  \item {\bf Case 1:} The path $P$ contains only vertices on level 1
    and 2. This is equivalent to reconnecting vertices on level 2 to
    vertices on level 1. The level of every vertex is the same in
    $L_{F_i}$ and $L_{F_{i+1}}$. Thus, the claim holds on $F_{i+1}$.
  \item {\bf Case 2:} The path $P$ contains a vertex $v_k$ not on
    level 1 or 2.
    By the construction, $v_k$ has degree one in $F$.
    Thus, $v_k$ is the end-vertex of $P$ and all other vertices are on
    level 1 and 2; otherwise, we can stop at the first vertex that is
    not on level 1 or 2 and obtain a shorter cost-reducing path.
    Specifically, we may write $P$ as $P=v_0v_1\ldots v_k$, where
    vertices $v_0,v_1\ldots,v_{k-1}$ are on level 1 and 2 alternately.
    Also, $k$ must be even since $P$ is a cost-reducing path.
    Now, let us augment from $v_0$ until we reach $v_{k-2}$.
    At this point, $v_{k-2}$ must have degree at least three (after
    the augmentation) because it is on level 1 (which means that it
    has degree more than one in $F_i$) and just got one more edge from
    the augmentation.
    If $v_k$ is on level 3, then we are done as it will be on level 1
    in $L_{F_{i+1}}$, and all vertices in its subtree will be 2 levels
    higher.
    Otherwise, $v_k$ must be on level 4. Let $a$ be a vertex on level
    3 adjacent to $v_k$ by an edge in $F_i$, which exists by the
    construction, and let $b$ be a vertex on level 2 adjacent to $a$
    by an edge not in $F_i$. There are two subcases.

    \begin{itemize}
    \item {\bf Case 2.1:} $v_{k-1}=b$. In this case, we augment along
      the path $v_1v_2\ldots v_{k-1}a$ instead.
    \item {\bf Case 2.2:} $v_{k-1}\neq b$. In this case, we get an
      edge cover with cardinality smaller than $|F_i|=|F|$ by deleting
      three edges in $F_i$ incident to vertices $b$,$v_{k-1}$,$v_k$ and
      adding edges $(a,b)$ and $(v_{k-1},v_k)$.
      (Note that for the case that $b$ is covered by an edge incident
      to $v_{k-2}$, we use the fact that $v_{k-2}$ has degree at least
      3 as discussed earlier.)
      So, this case is impossible because it contradicts the fact
      that $F$ is minimum cardinality edge cover.
    \end{itemize}
\end{itemize}
As there exist augmentations that do not change the parity of
vertices' levels, at the end of the process, we have an optimal
balanced edge cover whose extended set of centers is exactly $C$.
This completes the proof.
\end{proof}


\paragraph{Acknowledgment} We thank David Pritchard for useful
suggestions, Jane (Pu) Gao for pointing out some related surveys and
Dijun Luo for pointing out some errors in the earlier version of this
paper.

%
%
%


  \let\oldthebibliography=\thebibliography
  \let\endoldthebibliography=\endthebibliography
  \renewenvironment{thebibliography}[1]{%
    \begin{oldthebibliography}{#1}%
      \setlength{\parskip}{0ex}%
      \setlength{\itemsep}{0ex}%
  }%
  {%
    \end{oldthebibliography}%
  }

\bibliographystyle{plain}
\bibliography{semi}

\newpage
\appendix
\section*{APPENDIX}


\section{Edmonds-Karp-Tomizawa algorithm for weighted bipartite
matching} \label{sec:EK_algo}

In this section, we briefly explain Edmonds-Karp-Tomizawa (EKT)
algorithm. The algorithm starts with an empty matching $M$ and
iteratively augments (i.e., increases the size of) $M$. The matching
in each iteration is maintained so that it is {\em extreme}; i.e.,
it has the highest weight among matchings of the same cardinality. The
augmenting procedure is as follows. Let $M$ be a matching maintained
so far. Let $D_M$ be the directed graph obtained from $\hat{G}$ by
orienting each edge $e$ in $M$ from $\hat V$ to $U$ with length
$\ell_e=-w_e$ and orienting each edge $e$ not in $M$ from $U$ to
$\hat V$ with length $\ell_e=w_e$. Let $U_M$ (respectively, $\hat
V_M$) be the set of vertices in $U$ (respectively, $\hat V$) not
covered by $M$. If $|M|\neq |U|$, then there is a $U_M$-$\hat V_M$
path. Find a shortest such path, say $P$, and augment $M$ along $P$;
i.e., set $M=M\Delta P$. Repeat with the new value of $M$ until
$|M|=|U|$.

The bottleneck of this algorithm is the shortest path algorithm.
Although $D_M$ has negative-length edges, one can find a {\em
  potential} and apply Dijkstra's algorithm on $D_M$ with
non-negative {\em reduced cost}. The potential and reduced cost are
defined as follows.

\begin{definition}
A function $p:U\cup \hat{V}\rightarrow \mathbb{R}$ is a
\textit{potential} if, for every edge $uv$ in the residual graph
$D_M$, $\tilde\ell_{uv}=\ell_{uv}+p(u)-p(v)$ is non-negative. We
call $\tilde\ell$ a {\em reduced cost} with respect to a potential
$p$.
\end{definition}

The key idea of using a potential is that a shortest path from $u$
to $v$ with respect to a reduced cost $\tilde\ell$ is also a
shortest path with respect to $\ell$. We omit details here (see, e.g.,
(\cite[Chapter~7 and Section~17.2]{SchrijverA}), but note that we
can use a distance function found in the last iteration of the
algorithm as a potential, as in Algorithm~\ref{algo:EKT}.

\subsubsection*{Dijkstra's algorithm.}
\label{sec:Dijkstra} We now explain Dijkstra's algorithm on graph
$D_M$ with non-negative edge weight defined by $\tilde\ell$. Our
presentation is slightly different from the standard one but will be
easy to modify later. The algorithm keeps a subset $X$ of $U\cup
\hat{V}$, called {\em set of undiscovered vertices}, and a function
$d: U\cup\hat V\rightarrow \mathbb{R}^+$ (the \textit{tentative
  distance}). Start with $X= U\cup \hat V$ and set $d(u)=0$ for all
$u\in U_M$ and $d(v)=\infty$ for all $v\notin U_M$. Apply the
following iteratively:

\begin{algorithmic}[1]
\STATE Find $u\in X$ minimizing $d(u)$ over $u\in X$. Set
$X=X\setminus \{u\}$.
\STATE For each neighbor $v$ of $u$ in $D_M$, {\em relax} $uv$, i.e., set
$d(v)\leftarrow\min\{d(v),d(u)+\tilde\ell_{uv}\}$.
\end{algorithmic}

The running time of Dijkstra's algorithm depends on the
implementation. One implementation is by using Fibonacci heap. Each
vertex $v\in U\cup\hat V$ is kept in the heap with key $d(v)$.
Finding and extracting a vertex of minimum tentative distance can be
done in an amortized time bound of $O(\log |U\cup\hat V|)$ by
{\em extract-min} operation, and relaxing an edge can be done in an amortized time
bound of $O(1)$ by {\em decrease-key} operation.

Consider the running time of finding a shortest path.
Let $n=|U\cup V|$ and $m=|E|$.
We have to call insertion $O(n)$ times, decrease-key $O(m)$
times, and extract-min $O(n)$ times.
Thus, the overall running time is $O(m + n\log{n})$.


\section{Observation: $O(n^3)$ and $O(n^{5/2}\log (nW))$ time algorithms}
\label{sec:bimatching-algo}

We first recall the reduction from the weighted semi-matching problem
to the weighted bipartite matching problem, or equivalently, the
assignment problem. Given a bipartite graph $G=(U\cup V,E)$ with edge
weight $w$, an instance for the semi-matching problem, we construct a
bipartite graph $G=(U\cup\hat{V},\hat{E})$ with weight $\hat{w}$, an
instance for the weighted bipartite matching problem, as follows.
For every vertex $v\in V$ of degree $deg(v)$, we create {\em exploded}
vertices $v^1, v^2,\ldots, v^{deg(v)}$ in $\hat{V}$ and let
$\hat{V}_{v}$ denote a set of such vertices.
For each edge $uv$ in $E$ of weight $w_{uv}$, we also create $deg(v)$
edges $uv^1,uv^2,\ldots,uv^{deg(v_i)})$, with associated weights $w_{uv},
2\cdot w_{uv},\ldots, \deg(v)\cdot w_{uv}$, respectively.
It is easy to verify that finding optimal semi-matching in $G$ is
equivalent to finding a minimum matching in
$\hat{G}$. Figure~\ref{subfig:reduction} shows an example of this
reduction.

The construction yields a graph $\hat{G}$ with $O(m)$ vertices and
$O(nm)$ edges. Thus, applying any existing algorithm for the
weighted bipartite matching problem directly is not enough to get an
improvement.
However, we observe that the reduction can be done in
$O(n^2\log n)$ time, and we can apply the result of Kao et
al.~in~\cite{KLST01} to reduce the number of participating edges to
$O(n^3)$. Thus, Gabow and Tarjan's scaling algorithm~\cite{GT89} gives
us the following result.

\begin{observation}
If all edges have non-negative integer weight bounded by $W$, then
there is an algorithm for the weighted semi-matching problem with the
running time of $O(n^{5/2}\log (nW))$.
\label{obs:bimatching-algo}
\end{observation}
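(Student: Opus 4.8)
The plan is to take the reduction from weighted semi-matching on $G=(U\cup V,E)$ to minimum-weight bipartite matching on $\hat G=(U\cup\hat V,\hat E)$, recalled above from \cite{horn1973,BrunoCS74,Harvey_slide}, as the starting point, and then solve the resulting assignment problem by a weight-scaling algorithm. The obstacle that must be handled first is size: $\hat G$ has $\Theta(m)$ vertices and $\Theta(nm)$ edges, and its integer weights are as large as $\deg(v)\cdot W\le nW$, so feeding $\hat G$ to Gabow and Tarjan's scaling algorithm~\cite{GT89} verbatim is far too slow. Thus the two things that actually need doing are (i) constructing and manipulating $\hat G$ without ever listing all of $\hat E$, and (ii) restricting attention to a set of edges small enough that scaling runs within $O(n^{5/2}\log nW)$.

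For (i), I would observe that $\hat G$ need not be stored edge by edge: a whole bundle $uv^1,uv^2,\ldots,uv^{\deg(v)}$ is determined by the single weight $w_{uv}$ through $\hat w_{uv^i}=i\cdot w_{uv}$, so $\hat G$ admits an implicit description of size $O(n^2)$, producible in $O(n^2\log n)$ time (the logarithm coming from sorting, around each $v$, the $O(\deg(v))$ incident weights so that the exploded copies appear in a canonical prefix order, exactly as in the structural analysis of Section~\ref{sec:properties}). For (ii), I would invoke the decomposition/participating-edge machinery of Kao, Lam, Sung and Ting~\cite{KLST01}: from this implicit description, and at the cost of essentially a single maximum-cardinality-matching computation, one can discard every edge of $\hat G$ that lies in no minimum-weight matching, leaving only $O(n^3)$ edges. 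The bound on the surviving edges leans on structure already established: by Proposition~\ref{prop:main-properties}(1) (the extreme-matching property) the copies $v^1,v^2,\ldots$ of any fixed $v$ are always matched in a prefix, and because the weights $\hat w_{uv^i}$ grow linearly in $i$, only a bounded band of copies around that prefix boundary can ever be used.

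Finally I would run Gabow and Tarjan's scaling algorithm~\cite{GT89} on the pruned instance: since all weights are nonnegative integers bounded by $nW$, it consists of $O(\log nW)$ scaling phases, each of which---after the pruning---is a blocking-flow/augmenting-path computation on a unit-capacity bipartite graph; a careful accounting, using the $\sqrt{\,\cdot\,}$-type bound for such networks in the spirit of the analyses of \cite{ET75,Karzanov73} cited in Section~\ref{sec:unweighted}, brings the total to $O(n^{5/2}\log nW)$, and since the unweighted case is the special case $W=1$ this also yields the $O(n^{5/2}\log n)$ algorithm mentioned in the introduction. The step I expect to be the crux is (ii): arguing that the participating-edge set is simultaneously this small and extractable this cheaply, and checking that the per-phase cost after pruning is exactly what the target bound needs---everything else (correctness of the reduction, correctness of scaling) is standard and can be quoted.
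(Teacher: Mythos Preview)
Your overall strategy---reduce to minimum-weight bipartite matching on $\hat G$, prune edges, then run Gabow--Tarjan scaling---is exactly the paper's. The gap is quantitative and sits precisely where you flagged the crux: an edge bound of $O(n^3)$ is not strong enough. With $m'=O(n^3)$ edges, Gabow--Tarjan gives $O(\sqrt{|U|}\,m'\log(nW))=O(n^{7/2}\log nW)$, not $O(n^{5/2}\log nW)$. (Indeed $O(n^3)$ is already the trivial bound $|\hat E|=O(nm)$, so as stated your pruning step does nothing.) You need $m'=O(n^2)$.

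The paper obtains $O(n^2)$ by an argument much simpler than your appeal to \cite{KLST01} and Proposition~\ref{prop:main-properties}: for each $u\in U$, keep only the $|U|$ edges of $\hat G$ incident to $u$ with smallest weight. This leaves $|U|^2=O(n^2)$ edges, and one checks directly that some minimum-weight $U$-saturating matching lives in this subset (if an optimal matching used an edge $uv^i$ outside $u$'s kept set, then by pigeonhole some kept edge $uv'^{j}$ has its endpoint $v'^{j}$ unmatched, and swapping lowers the weight). Extracting these $|U|$ cheapest edges per $u$ is done with a small heap over the bundles $\{i\cdot w_{uv}\}$, taking $O((\deg_G(u)+|U|)\log|U|)$ per vertex and $O(n^2\log n)$ overall. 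No structural facts about extreme matchings or prefix-matched copies are needed here, and no maximum-cardinality-matching precomputation either. Once you have this $O(n^2)$-edge instance with integer weights bounded by $nW$, Gabow--Tarjan runs in $O(\sqrt{|U|}\cdot n^2\cdot\log(nW))=O(n^{5/2}\log nW)$ as claimed.
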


This result immediately gives an $O(n^{5/2}\log n)$ time algorithm for
the unweighted case (i.e., $W=1$). Hence, we already have an
improvement upon the previous $O(nm)$ time algorithm for the case of
dense graph.

Now, we give an explanation on the observation.
If we reduce the problem normally (as in
Section~\ref{sec:weighted}) to get $\hat G$, then the number of
edges in $\hat G$ and the running time will be $O(nm)$. However,
since the size of any matching in the graph $\hat G$ is at most
$|U|$, it suffices to consider only the smallest $|U|$ edges in
$\hat G$ incident to each vertex in $U$. Therefore, we may assume
that $\hat G$ has $O(n^2)$ edges. (The same observation is also used
in \cite{KLST01}.)

More precisely, let $E_u$ be a set of edges incident to $u$ in $\hat
G$, and $R$ be a set of $|U|$ smallest edges of $E_u$. If the
maximum matching of minimum weight, say $M$, contains an edge
$e\in E_u\setminus R$, then $R\cup\{e\}$ has $|U|+1$ edges. This implies
that there is an edge $e'\in R$ incident to a vertex $v\in\hat V$
not matched by $M$. Thus, we can replace $e$ by $e'$ which results
in a matching of smaller weight. Therefore, we need to keep only
$|U|^2$ edges in our reduction.
Moreover, we can also reduce the time of the reduction to
$O(n^2\log{n})$.

The faster reduction is applied at each vertex $u\in U$ as follows.
First, we create a binary graph $H$.
Each node of $H$ has a key $(e=uv,i)$ and a value $i\cdot w_e$, where
$e=uv\in E$ and $i$ is an integer.
In other words, the value of the node in $H$ with key $(e,i)$ is the weight of an edge $uv^i$ in
the graph $\hat G$.
Initially, we add to $H$ a key $(e,1)$ with value $w_e$ for all edges
$e\in{E}$ incident to $u$.
We iteratively extract from $H$ the key $(e=uv,i)$ with minimum value.  
Then we create an edge $uv^i$ in $E'$ with weight $i\cdot w_e$.
If $u$ is incident to less than $|U|$ edges in $E'$, then we insert to
$H$ a key $(uv,i+1)$ with value $(i+1)\cdot w_e$; otherwise, we stop.
We repeat the process until the heap $H$ is empty.
Thus, the process for each vertex $u\in U$ terminates in $|U|$
rounds.
The pseudocode of the reduction is given in
Algorithm~\ref{algo:reduction}.

\begin{algorithm}
\caption{\pname{Reduction} $(G=(U\cup V,E),w)$}
\label{algo:reduction}
\begin{algorithmic}[1]
  \STATE Create an empty set $\hat E$, $\hat V$.
  \FORALL{vertices $u\in U$}
    \STATE Create a binary heap $H$.
    \FORALL{edges $e$ incident to $u$}
       \STATE Insert to $H$ a node with key $(e,1)$ and value $w(u)$.
    \ENDFOR
    \FOR{$k\leftarrow 1$ to $|U|$}
       \STATE Extract-min from $H$, resulting in $(e=uv, i)$.
       \STATE Insert to $\hat V$ a vertex $v^i$
              (if it does not exist).
       \STATE Insert to $\hat E$ an edge $uv^i$.
       \STATE Insert to $H$ a node with key $(e=uv,i+1)$ and
              value $(i+1)\cdot w_e$.
    \ENDFOR
    \STATE Delete the binary heap $H$.
  \ENDFOR
  \RETURN $\hat G=(U\cup\hat V,\hat E)$.
\end{algorithmic}
\end{algorithm}

Consider a vertex $u\in U$. At any time during the reduction, there
are $O(\deg_G(u))$ edges in $H$. So, the extract-min operation takes
$O(\log(\deg_G(u)))$ time per operation. The time for inserting a vertex to $\hat
V$ and an edge to $\hat E$ is $O(1)$.
%
%
For each vertex $u\in U$, we have to call insertion $\deg_G(u)+|U|$ times and extract-min $|U|$ times.  Thus, the time
required to process each vertex of $U$ is
$O((\deg_G(u)+|U|)\log|U|)$. It follows that the total running time
of the reduction is $O((|E|+|U|^2)\log |U|)=O(n^2\log n)$.

Now, we run algorithms for the bipartite matching problem on the graph
$\hat{G}$ with $n^2$ edges.
Using Edmonds-Karp-Tomizawa algorithm, the running time becomes
$O(nm)=O(n^3)$.
Using  Gabow-Tarjan's scaling algorithm,
the running time becomes $O(\sqrt{n}m\log{(nW)}=O(n^{5/2}\log{(nW)})$,
where $W$ is the maximum edge weight.


\section{Dinitz's blocking flow algorithm}
\label{sec:Dinitz}

In this section, we will give an outline of Dinitz's blocking flow
algorithm~\cite{Dinitz70}. Given a network $R$ with source $s$ and
sink $t$, a flow $g$ is a {\em blocking flow} in $R$ if every path
from the source to the sink contains a saturated edge, an edge with
zero residual capacity.  A blocking flow is usually called a greedy
flow since the flow cannot be increased without any rerouting of
the previous flow paths.  In a unit capacity network, the depth-first 
search algorithm can be used to find a blocking flow in linear time. 

Dinitz's algorithm works in a {\em layered graph}, a subgraph whose
edges are in at least one shortest path from $s$ to $t$.  This
condition implies that we only augment along the shortest paths. The
algorithm proceeds by successively find blocking flows in the layered
graphs of the residual graph of the previous round. The following is
an important property (see, e.g.,
\cite{AMOBook,SchrijverA,TarjanBook} for proofs).  It states that
the distance between the source and the sink always increase after
each blocking flow step.

In the case of unit-capacity, Even-Tarjan~\cite{ET75} and
Karzanov~\cite{Karzanov73} showed algorithm that finds a maximum
flow in time $O(\min\{n^{2/3},m^{1/2}\}m)$. In the case of
{\em unit-network}, i.e., every vertex either has indegree $1$ or
outdegree $1$, the algorithm finds a maximum flow in time
$O(\sqrt{n}m)$.


\end{document}